\tikzstyle{player1}=[draw,rounded rectangle, minimum size=5mm]
\tikzstyle{player2}=[draw,rectangle,minimum size=5mm]
\tikzstyle{widget}=[draw=red,rectangle, rounded rectangle=10pt,dashed,minimum size=6mm,fill=yellow]
\tikzset{every loop/.style={looseness=7}}
\newcommand{\zug}[1]{\langle #1  \rangle}
\newcommand{\stam}[1]{}
\newcommand{\realpos}{\mbox{I$\!$R}_{\ge 0}}
\newcommand{\Nat}{\mathbb{N}}
\newcommand{\Q}{\mathbb{Q}}
\newcommand{\A}{{\cal A}}
\newcommand{\N}{{\cal N}}
\newcommand{\T}{{\cal T}}
\newcommand{\visits}{{\it visits}}
\newcommand{\load}{{\it load}}
\newcommand{\nextT}{{\it next}_T}
\newcommand{\F}{{\cal F}}
\newcommand{\G}{{\cal G}}
\renewcommand{\P}{{\cal P}}
\newcommand{\R}{{\cal R}}
\newcommand{\Prof}{{\it profiles}}
\newcommand{\K}{{\cal K}}
\newcommand{\PO}{Player~$1$\xspace}
\newcommand{\PT}{Player~$2$\xspace}
\newcommand{\set}[1]{\{ #1 \} }
\title{Timed Network Games with Clocks}
\author{Guy Avni}{IST Austria, Klosterneuburg, Austria}{guy.avni@ist.ac.at}{}{Supported by the Austrian Science Fund (FWF) under grants S11402-N23 (RiSE/SHiNE), Z211-N23 (Wittgenstein Award), and M2369-N33 (Meitner fellowship).}
\author{Shibashis Guha}{Universit\'{e} Libre de Bruxelles, Brussels, Belgium}{shibashis.guha@ulb.ac.be}{}{Supported partially by the ARC project ``Non-Zero Sum Game Graphs: Applications to Reactive Synthesis and Beyond'' (F\'{e}d\'{e}ration Wallonie-Bruxelles).}
\author{Orna Kupferman}{Hebrew University, Jerusalem, Israel}{orna@cs.huji.ac.il}{}{Supported by the European Research Council  (FP7/2007-2013) / ERC grant agreement no 278410.}
\authorrunning{G. Avni, S. Guha and O. Kupferman} 
\subjclass{Theory of computation -- Theory and algorithms for application domains -- Algorithmic game theory and mechanism design -- Algorithmic game theory,
Theory of computation -- Theory and algorithms for application domains -- Algorithmic game theory and mechanism design -- Network formation,
Theory of computation -- Models of computation -- Timed and hybrid models
}
\keywords{Network games, Timed automata, Nash equilibrium, Equilibrium inefficiency}
\begin{document}

\maketitle

\begin{abstract}
Network games are widely used as a model for selfish resource-allocation problems. In the classical model, each player selects a path connecting her source and target vertices. The cost of traversing an edge depends on the {\em load}; namely, number of players that traverse it. Thus, it abstracts the fact that different users may use a resource at different times and for different durations, which plays an important role in determining the costs of the users in reality. For example, when transmitting packets in a communication network, routing traffic in a road network, or processing a task in a production system, actual sharing and congestion of resources crucially depends on time.

In \cite{AGK17}, we introduced {\em timed network games}, which add a time component to network games. Each vertex $v$ in the network is associated with a cost function, mapping the load on $v$ to the price that a player pays for staying in $v$ for one time unit with this load. 
Each edge in the network is guarded by the time intervals in which it can be traversed, which forces the players to spend time in the vertices. In this work we significantly extend the way time can be referred to in timed network games. In the model we study, the network is equipped with {\em clocks}, and, as in timed automata, edges are guarded by constraints on the values of the clocks, and their traversal may involve a reset of some clocks. 
We argue that the stronger model captures many realistic networks. The addition of clocks breaks the techniques we developed in \cite{AGK17} and we develop new techniques in order to show that positive results on classic network games carry over to the stronger timed setting.
\end{abstract}
\stam{
The model in \cite{AGK17} can be viewed as a special case of the model studied here, with a single clock that is never reset. We argue that the stronger model captures many realistic networks, and we solve the challenging technical problems it involves. In particular, as clocks may be reset, there is no bound on the duration of strategies, and a partitioning of the infinite space of clock values to a finite one is not always possible. We are still able to obtain positive results about the stability of timed network games and present optimal algorithms to classical problems about network games in the stronger timed setting.
}

\section{Introduction}
\label{intro}
Network games (NGs, for short) \cite{AD+08,Ros73,RT02} constitute a well studied model of non-cooperative games. The game is played among selfish players on a network, which is a directed graph. Each player has a source and a target vertex, and a strategy is a choice of a path that connects these two vertices. The cost a player pays for an edge depends on the {\em load} on it, namely the number of players that use the edge, and the total cost is the sum of costs of the edges she uses. In {\em cost-sharing} games, load has a positive effect on cost: each edge has a cost and the players that use it split the cost among them. Then, in {\em congestion} games\footnote{The name {\em congestion games} is sometimes used to refer to games with general latency functions. We find it more appropriate to use it to refer to games with non-decreasing functions.}, load has a negative effect on cost: each edge has a non-decreasing {\em latency function\/}
that maps the load on the edge to its cost.

One limitation of NGs is that the cost of using a resource abstracts the fact that different users may use the resource at different times and for different durations. This is a real limitation, as time plays an important role in many real-life settings.
For example, in a road or a communication system, congestion only affects cars or messages that use a road or a channel simultaneously.
We are interested in settings in which congestion affects the quality of service (QoS) or the way a price is shared 
by entities using a resource at the \emph{same time} (rather than affecting the travel time).
For example, discomfort increases in a crowded train (in congestion games) or price is shared by the passengers in a taxi (in cost-sharing games).
\stam{
Similarly, in mobile networks, the call quality depends on the number of subscribers using the network simultaneously. As a third example, when processing a task in a production system, jobs move from one station to another. The way the cost of running the stations is shared by the jobs that use it depends on the time spent in the stations and on the synchronization among the jobs.
}

The need to address temporal behaviors has attracted a lot of research in theoretical computer science.  Formalisms like temporal logic \cite{Pnu81} enable the specification of the temporal ordering of events. Its refinement to formalisms like real-time temporal logic \cite{AH90}, interval temporal logic \cite{MM83}, and timed automata (TAs, for short) \cite{AD94}  enables the specification of {\em real-time\/} behaviors. Extensions of TAs include {\em priced timed automata} (PTAs, for short) that assign costs to real-time behaviors. Thus, PTAs are suitable for reasoning about quality of real-time systems. They lack, however, the capability to reason about multi-agent systems in which the players' choices affect the incurred costs.
\stam{
Early work on timed aspects of networks focus on  {\em flow networks\/} \cite{FF62}, and spans from theoretical work on queues (c.f., \cite{Vic69,Yag71}) to nowadays practical research on traffic engineering in software defined networks \cite{AKL13b}.
}

We study {\em timed network games} (TNGs, for short) --
a new model that adds a time component to NGs. A TNG is played on a {\em timed-network} in which edges are labeled by {\em guards} that specify time restrictions on when the edge can be traversed. Similar to NGs, each player has a source and target vertex, but a strategy is now a {\em timed path} that specifies, in addition to which vertices are traversed, the amount of time that is spent in each vertex. Players pay for staying in vertices, and the cost of staying in a vertex $v$ in a time interval $I \subseteq \realpos$
is affected by the load in $v$ during $I$. In \cite{AGK17}, we studied a class of TNGs that offered a first extension of NGs to a timed variant in which the reference to time is restricted: the guards on the edges refer only to {\em global} time, i.e., the time that has elapsed since the beginning of the game. In the model in \cite{AGK17}, it is impossible to refer to the duration of certain events that occur during the game, for example, it is not possible to express constraints that require staying exactly one time unit in a vertex. Accordingly, we refer to that class as {\em global TNGs} (GTNGs, for short).

In this work, we significantly extend the way time can be referred to in TNGs. 
We do this by adding {\em clocks\/} that may be reset along the edges,
and by allowing the guards on the edges to refer to the values of all clocks. GTNGs can be viewed as a fragment in which there is only a single clock that is never reset. We demonstrate our model in the following example.

\begin{example}
\label{first example}
Consider a setting in which messages are sent through a network of routers. Messages are owned by selfish 
agents who try to avoid congested routes, where there is a greater chance of 
loss or corruption. The owners of the messages decide how much time they spend in each router. Using TNGs, we can model constraints on these times, as well as constraints on global events, in particular, arrival time. Note that in some applications, c.f., advertising or security, messages need to patrol the network with a lower bound on their arrival time.

Consider the TNG appearing in Figure~\ref{fig:ex1}. The vertices in the TNG model the routers.
There are two players that model two agents, each sending a message. 
The source of both messages is $s$ and the targets are $u_1$ and $u_2$, for messages~$1$ and~$2$, respectively.  
The latency functions are described in the vertices, as a function of the load $m$; e.g., the latency function in $v_2$ is $\ell_{v_2}(m) = 3m$. Thus, when a single message stays in $v_2$ the cost for each time unit is $3$, and when the two messages visit $v_2$ simultaneously, the cost for each of them is $6$ per unit time. The network has two clocks, $x$ and $y$. 
Clock $x$ is reset in each transition and thus is used to impose restrictions on the time that can be spent in each router: since all transitions can be taken when $1 \leq x \leq 2$, a message stays between $1$ and $2$ time units in a router.  
Clock $y$ is never reset, thus it keeps track of the global time. 
The guards on clock $y$ guarantee that message $1$ reaches its destination by time $4$ but not before time $3$ and message $2$ reaches its destination by time $5$ but not before time $4$.

Suppose the first agent chooses the timed path $(s,2), (v_1,1), u_1$, thus message $1$ stays in $s$ for two time units and in $v_1$ for one time unit before reaching its destination $u_1$. Suppose the second agent chooses the path $(s,2), (v_1, 2), (v_2,1), u_2$. Note that crossing an edge is instantaneous. Since both messages stay in the same vertices during the intervals $I_1 = [0,2]$ and $I_2 = [2,3]$, the load in the corresponding vertices is $2$. 
During interval $I_1$, each of the agents pays $|I_1| \cdot \ell_s(2) = 2 \cdot 4$ and during $I_2$, each pays $|I_2| \cdot \ell_{v_1}(2) = 1 \cdot 2$. 
Message $2$ stays in $v_1$ alone during the interval $[3,4]$ and in $v_2$ during the interval $[4,5]$, for which it pays $1$ and $3$, respectively.
The total costs are thus $10$ and $14$.\hfill\qed
\begin{figure}[ht]
\center
\includegraphics[height=1.7cm]{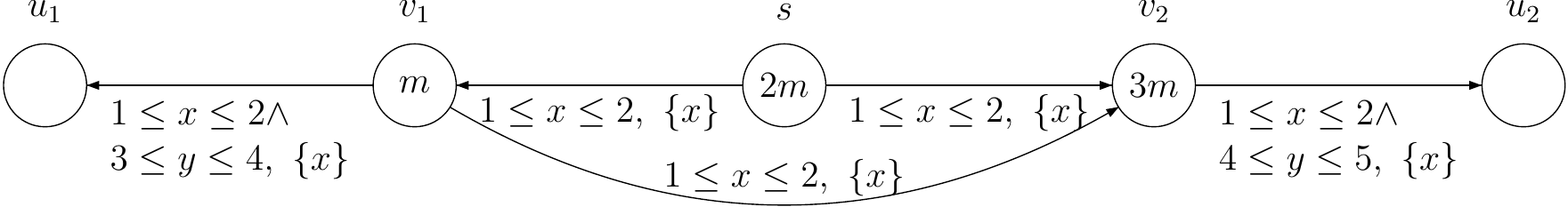}
\caption{A congestion TNG.}
\label{fig:ex1}
\end{figure}
\end{example}

\vspace{-.4cm}
Before we elaborate on our contribution, let us survey relevant works, namely, extensions of NGs with temporal aspects and extensions of timed-automata to games. 
Extensions of NGs that involve reasoning about time mostly study a cost model in which the players try to minimize the time of arrival at their destinations (c.f., \cite{HMRT11,KP12,KS11,PPT09}), 
where, for example, congestion affects the duration of crossing an edge. These works are different from ours since we consider a QoS cost model. 
An exception is \cite{HMRT11}, which studies the QoS costs. A key difference in the models is that there, time is discrete and the players have finitely many strategies.
Thus, reductions to classical resource allocation games is straightforward while for TNGs it is not possible, as we elaborate below.
Games on timed automata were first studied in \cite{AM99} in which an algorithm to solve timed games with timed reachability objective was given. The work was later generalized and improved \cite{ABM04,BCFL04,HIM13,BGKMMT14}.
Average timed games, games with parity objectives, mean-payoff games and energy games have also been studied in the context of timed automata \cite{AFHMS03,JT08,CHP11,BCR14,GJKT16}. All the timed games above are two-player zero-sum ongoing games. Prices are fixed and there is no notion of load. Also, the questions studied on these games concern their decidability, namely finding winners and strategies for them.
TNGs are not zero-sum games, so winning strategies do not exist. Instead, the problems we study here concern rationality and stability.

The first question that arises in the context of non-zero-sum games is the existence of {\em stable outcomes}. In the context of NGs, the most prominent stability concept is that of a (pure) {\em Nash equilibrium\/} (NE, for short) \cite{Nas50} -- a profile such that no player can decrease her cost by unilaterally deviating from her current strategy.\footnote{Throughout this paper, we consider {\em pure\/}
strategies, as is the case for the vast literature on NGs. }
Decentralized decision-making may lead to
solutions that are sub-optimal for the society as a whole. The standard measures to quantify the inefficiency incurred due to selfish behavior is the {\em price of  stability} (PoS) \cite{AD+08} and the {\em price of anarchy} (PoA) \cite{KP09b}. In both measures we compare against the {\em social optimum} (SO, for short), namely a profile that minimizes the sum of costs of all players. The PoS (PoA, respectively)  is the best-case (worst-case) inefficiency of an NE;
that is, the ratio between the cost of a best (worst) NE and the SO.
\stam{
In Example \ref{first example}, profile $P$ is an SO, and is also a (best) NE,
while profile $P'$ is a worst NE. Note that there can be uncountably many NEs in the TNG in Example \ref{first example}.
Indeed, for all $t \in [0,1]$, the profile $P'_t$ with the strategies
$(s,2),(a,4),(w,3+t),(a,4-t),u_1$ and
$(s,2),(a,4),(w,3+t),(a,4),(w,1-t),u_2$
is an NE with costs $8/2 \cdot 2 + 4/2 \cdot 4 + 6/2 \cdot (3+t) + 4/2 \cdot (4-t) = 33+t$ and
$8/2 \cdot 2 + 4/2 \cdot 4 + 6/2 \cdot (3+t) + (4/2 \cdot (4-t) + 4 \cdot t) + 6 \cdot (1-t) = 39-t$, respectively.
}

The picture of stability and equilibrium inefficiency for standard NGs is well understood. Every NG has an NE, and in fact these games are {\em potential games} \cite{Ros73}, 
which have the following stronger property: a {\em best response sequence\/} is a sequence of profiles $P_1,P_2,\ldots$ such that, for $i \geq 1$, the profile $P_{i+1}$ is obtained from $P_i$ by letting some player deviate and decrease 
her
personal cost. In finite potential games, every best-response sequence converges to an NE. 
For $k$-player cost-sharing NGs, the PoS and PoA are $\log k$ and $k$, respectively \cite{AD+08}. For congestion games with affine cost functions, $\mbox{PoS} \approx 1.577$ \cite{CK05,ADGMS11} and $\mbox{PoA}=\frac{5}{2}$ \cite{CK05b}. 

In \cite{AGK17}, we showed that these positive results carry over to GTNGs. A key technical feature of GTNGs 
is that since guards refer to global time, it is easy to find an upper bound $T$ on the time by which all players reach their destinations. Proving existence of NE follows from a reduction to NGs, 
using a zone-like structure \cite{ACHHHNOSY95,BY03}. 
The introduction of clocks with resets breaks the direct reduction to NGs and questions the existence of a  bound by which the players arrive at their destinations.
Even with an upper bound on time, a reduction from TNGs to NGs is not likely.
Consider the following example. From $s_1$, the earliest absolute time at which vertex $v_2$ is reached is $2$
following the path $\zug{(s_1,0)(v_1,0),(v_1,2),(v_2,2)}$, and the value of clock $x$ at $v_2$ is $0$.
On the other hand, when $v_2$ is reached from $s_2$ following the path $\zug{(s_2,0)(v_2,0),(v_2,2)}$,
then at absolute time $2$, the value of clock $x$ at $v_2$ is $2$ and the transition to vertex $u$
is thus enabled.
This leads to a spurious path $\zug{(s_1,0)(v_1,0),(v_1,2),(v_2,2),(u,2)}$ which 
does not correspond to a valid path in th TNG.

\begin{figure}[ht]
\centering
\includegraphics[width=0.45\textwidth]{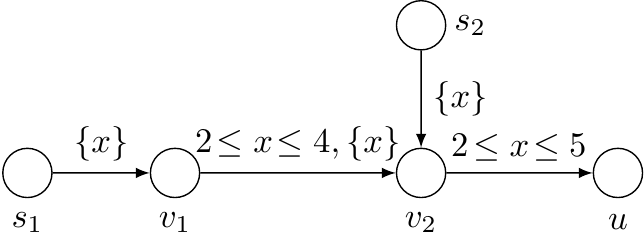} \ \hspace{.3cm}
\includegraphics[width=0.5\textwidth]{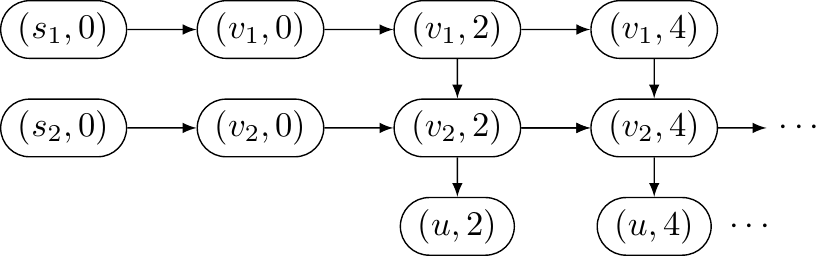}
\caption{\label{1c2ng} An attempt to translate a TNG to an NG.}
\end{figure}

Further, 
to see the difficulty in finding such a bound, consider, for example, a cost-sharing game in which all players, on their paths to their targets, need to stay for one time unit in a ``gateway'' vertex $v$ that costs $1$ (see details in Section~\ref{sec tb}). Assume also that, for $1 \leq i \leq k$, Player $i$ can only reach $v$ in times that are multiples of $p_i$, for relatively prime numbers $p_1,\ldots,p_k$. The SO is obtained when all players synchronize their visits to $v$, and such a synchronization forces them to wait till time $p_1 \cdot \ldots \cdot p_k$, which is exponential in the TNG.

The lack of an upper bound on the global time in TNGs
demonstrates that we need a different approach to obtain positive results for general TNGs. We show that TNGs are guaranteed to have an NE. Our proof uses a combination of techniques from real-time models and resource allocation games. Recall that a PTA assigns a price to a timed word. We are able to reduce the {\em best-response\/} and the {\em social-optimum\/} problems to and from 
the problem of finding cheapest runs in PTAs \cite{BBBR07}, showing that
the problems are PSPACE-complete. Next, we show that TNGs are potential games. Note that since players have 
uncountably many strategies, the fact that TNGs are potential games does not immediately imply existence of an NE, 
as a best-response sequence may not be finite.
We show that there is a 
best-response sequence
that terminates in an NE.
For this, we first need to show the existence of an integral best-response, which is obtained from the reduction to PTAs. Finally, given a TNG, we find a time $T$ such that there exists an NE in which all players reach their destination by time $T$.


\section{Preliminaries}
\label{sec:prelims}
\subsection{Resource allocation games and network games}
For $k \in \Nat$, let $[k]=\set{1,\ldots, k}$.
A {\em resource allocation game} (RAG, for short) is $R = \zug{k, E, \set{\Sigma_i}_{i \in [k]}, \allowbreak \set{\ell_e}_{e \in E}}$, where $k\in \Nat$ is the number of players; $E$ is a set of resources; for $i \in [k]$, the set strategies of Player~$i$ is $\Sigma_i \subseteq 2^E$; and, for $e \in E$, the {\em latency function} $\ell_e: [k] \rightarrow \Q_{\geq 0}$ maps a load on $e$ to its cost under this load. A {\em profile\/} is a choice of a strategy for each player. The set of profiles of $R$ is $\Prof(R) = \Sigma_1 \times \ldots \times \Sigma_k$. 
For $e \in E$, we define the {\em load} on $e$ in a profile $P=\zug{\sigma_1, \ldots, \sigma_k}$, denoted $load_P(e)$, as the number of players using $e$ in $P$, thus $load_P(e) = |\set{i \in [k]: e \in \sigma_i}|$. The cost a player pays in profile $P$, denoted $cost_i(P)$, depends on the choices of the other players. We define $cost_i(P) = \sum_{e \in \sigma_i} \ell_e(load_P(e))$.

{\em Network games} (NGs, for short) can be viewed as a special case of RAGs where strategies are succinctly represented by means of paths in graphs. An NG is $\N = \langle k, V$, $E$, $\set{\zug{s_i, u_i}}_{i \in [k]}, \set{\ell_e}_{e \in E}\rangle$, where $\zug{V, E}$ is a directed graph; for $i \in [k]$, the vertices $s_i$ and $u_i$ are the {\em source\/} and {\em target\/} vertices of Player~$i$; and the latency functions are as in RAGs. The set of strategies for Player~$i$ is the set of simple paths from $s_i$ to $u_i$ in $\N$. Thus, in NGs, the resources are the edges in the graph.

We distinguish between two types of latency functions.
In {\em cost-sharing games}, the players that visit a vertex share its cost equally. Formally, every $e \in E$ has a cost $c_e \in \Q_{\geq 0}$ and its latency function is $\ell_e(l) = \frac{c_e}{l}$. Note that these latency functions are decreasing, thus the load has a positive effect on the cost.
In contrast, in \emph{congestion games}, the cost functions are non-decreasing and so the load has a negative effect on the cost. Typically, the latency functions are restricted to simple functions such as linear latency functions, polynomials, and so forth. 

\subsection{Timed networks and timed network games}
A \emph{clock} is a variable that gets values from $\realpos$ and whose value increases as time elapses. A {\em reset\/} of a clock $x$ assigns value $0$ to $x$.
A \emph{guard} over a set $C$ of clocks
is a conjunction of \emph{clock constraints} of the form $x \sim m$, for $x \in C$, $\sim \, \in \{\le, =, \ge\}$, and $m \in \Nat$. Note that we disallow guards that use the operators $<$ and $>$
(see Remark~\ref{rem:<}).
A guard of the form $\bigwedge_{x \in C} x \ge 0$ is called \emph{true}.
The set of guards over $C$ is denoted $\Phi(C)$.
A \emph{clock valuation} is an assignment $\kappa: C \rightarrow \realpos$.
A clock valuation $\kappa$ \emph{satisfies} a guard $g$, denoted $\kappa \models g$,
if the expression obtained from $g$ by replacing each clock $x \in C$ with the value $\kappa(x)$ is valid. 
\stam{
\begin{remark}
This syntactic restriction of not allowing $<$ and $>$ in the guards is in order to focus on the main results of the paper without dealing with open sets, or notions of nearly optimal profile instead of an optimal profile, as we will see later, thus keeping the discussion simpler. 
\end{remark}
}
\stam{
For every $d \in \realpos$ and every clock valuation $\kappa$, we define the valuation $\kappa + d$ as $(\kappa + d)(x) = \kappa(x) + d$, for each $x \in C$.
For every subset $R$ of $C$, the valuation $\kappa_{[R \leftarrow \overline{0}]}$ is defined as
$\kappa_{[R \leftarrow \overline{0}]}(x)=0$ if $x \:\in\: R$, else $\kappa_{[R \leftarrow \overline{0}]}(x) = \kappa(x)$.
}

A {\em timed network\/} is a tuple
$\A = \zug{C,V, E}$,
where $C$ is a set of clocks, $V$ is a set of vertices, and
$E \subseteq V \times \Phi(C) \times 2^C \times  V$ is a set of directed edges in which each edge $e$ is associated with a guard $g \in \Phi(C)$ that should be satisfied when $e$ is traversed
and a set $R \subseteq C$ of clocks that are reset along the traversal of $e$.

When traversing a path in a timed network, time is spent in vertices, and edges are traversed instantaneously. 
Accordingly, 
a {\em timed path\/} in $\A$ is a sequence
$\eta=\zug{\tau_1, e_1}, \dots$, $\zug{\tau_n, e_n} \in (\realpos \times E)^*$, describing edges that the path traverses along with their traversal times. The timed path $\eta$ is {\em legal} if the edges are successive and the guards associated with them are satisfied. Formally, there is a sequence $\zug{v_0, t_{0}}, \dots, \zug{v_{n-1}, t_{n-1}},v_n \in (V \times \realpos)^*\cdot V$, describing the vertices that $\eta$ visits and the time spent in these vertices, such that for every $1 \leq j \leq n$, the following hold: (1)  
$t_{j-1}=\tau_j-\tau_{j-1}$, 
with $\tau_0=0$, (2) there is $g_j \in \Phi(C)$ and $R_j \subseteq C$, such that $e_j= \zug{v_{j-1}, g_j, R_j, v_j}$, (3) there is a clock valuation $\kappa_j$ that describes the values of the clocks 
before the
incoming edge to
vertex $v_j$ is traversed. Thus, $\kappa_1(x) = t_0$, for all $x \in C$, and for $1 < j \leq n$, we distinguish between clocks that are reset when $e_{j-1}$ is traversed and clocks that are not reset: for $x \in R_{j-1}$, we define $\kappa_j(x) = t_{j-1}$, and for $x \in (C \setminus R_{j-1})$, we define $\kappa_j(x) = \kappa_{j-1}(x) + t_{j-1}$, and (4) for every $1 \leq j \leq n$, we have that $\kappa_j \models g_j$. We sometimes refer to $\eta$ also as the sequence $\zug{v_0, t_{0}}, \dots, \zug{v_{n-1}, t_{n-1}},v_n$. 

\stam{
$\eta = \zug{v_0, t_{0}}, \dots, \zug{v_{n-1}, t_{n-1}},v_n \in (V \times \realpos)^*\cdot V$, which states the sequence of vertices that are traversed in addition to stating  the time that is spent in each vertex. We say that $\eta$ is a {\em legal} timed path if it satisfies the guards in the network, which we define formally below. For $1 \leq j \leq n$, let $\tau_j$ denote the time that has elapsed since the traversal of $\eta$ starts
and until $\eta$ enters the vertex $v_j$, thus $\tau_j = \tau_{j-1} + t_{j-1}$, where $\tau_0$ denotes the time at which $\eta$ starts and we define $\tau_0 = 0$. We distinguish between {\em global} time, which refers to $\tau_j$, and {\em local} time, which refers to the time in the clocks. Note that global and local time may differ as the clocks may be reset. In order to define the local time, we need to specify which edges $\eta$ traverses. Consider a sequence of edges $e_1,\ldots, e_n \in E$, where for $1 \leq j \leq n$, we have $e_j = \zug{v_{j-1}, g_j, R_j, v_j}$. We assume all clocks initiate at time $0$, and we define $\kappa_0(x) = 0$, for all $x \in C$. Such a sequence determines clock valuations $\kappa_1, \ldots, \kappa_n$. Intuitively, the clock values in $\kappa_j$, for $1 \leq j \leq n$, are the local times immediately before crossing $e_j$; i.e., clocks that are reset when crossing $e_j$ are not yet reset. Formally, we define $\kappa_1(x) = \kappa_0(x) + t_0$, for every $x \in C$. For $1 < j \leq n$, we distinguish between clocks that where reset when $e_{j-1}$ was crossed and those that where not. Accordingly, for $x \in R_{j-1}$, we define $\kappa_j(x) = t_{j-1}$, and for $x \in (C \setminus R_j)$, we define $\kappa_j(x) = \kappa_{j-1}(x) + t_{j-1}$. We say that $\eta$ is legal iff there exists a sequence of edges $e_1,\ldots, e_n \in E$ as in the above such that the corresponding clock valuations $\kappa_1, \ldots, \kappa_n$ satisfy the guards, namely for every $1 \leq j \leq n$, we have $\kappa_j \models g_j$. We sometimes refer to $\eta$ also as the sequence $\zug{\tau_1, e_1}, \dots, \zug{\tau_n, e_n} \in (\realpos \times E)^*$.
}
Consider a finite set $T \subseteq \realpos$ of time points.
We say that a timed path $\eta$ is a {\em $T$-path\/}
if all edges in $\eta$ are taken at times in $T$.
Formally, for all 
$1\le j \le n$,
we have that $\tau_j \in T$. We refer to the time at which $\eta$ {\em ends} as the time $\tau_n$ at which the destination is reached. We say that $\eta$ is {\em integral} if $T \subseteq \Nat$.

A {\em timed network game\/} (TNG, for short) extends an NG by imposing constraints on the times at which edges may be traversed. Formally,  
$\T = \zug{k, C, V, E, \{\ell_v\}_{v \in V},$ $\zug{s_i, u_i}_{i \in [k]}}$ includes a set $C$ of clocks, and $\zug{C, V, E}$ is a timed network. Recall that while traversing a path in a timed network, time is spent in vertices.
Accordingly, the latency functions now apply to vertices, thus $\ell_v: [k] \rightarrow \Q_{\geq 0}$ maps a load on vertex $v$ to its cost under this load. 
Traversing an edge is instantaneous and is free of charge.
A strategy for Player~$i$, for $i \in [k]$, is then a legal timed path from $s_i$ to $u_i$. 
We assume all players have at least one strategy.

\begin{remark}
A possible extension of TNGs is to allow costs on edges. Since edges are traversed instantaneously, these costs would not be affected by load. Such an extension does not affect our results and we leave it out for sake of simplicity. Another possible extension is allowing strict time guards, which we discuss in Remark~\ref{rem:<}.
\end{remark}
\stam{
Note that in network games, cost is incurred due to sharing of resources.
In TNGs, the resources are the vertices used over a period of time and cost is incurred by spending time in the vertices.
Since in our model, 
no time elapses on the edges, we do not have cost on the edges.
Technically, however, all the positive results and the upper bounds presented in the paper hold even when we add costs to the edges and absence of cost on the edges strengthen the negative results -- all hold already for the model without costs on the edges.
The syntactic restriction of not allowing $<$ and $>$ in the guards is in order to focus on the main results of the paper without dealing with open sets and to keep the discussion simple.
\end{remark}
}

The cost Player~$i$ pays in profile $P$, denoted $cost_i(P)$, depends on the vertices in her timed path, the time spent on them, and the load during the visits. In order to define the cost formally, we need some definitions. 
For a finite set $T \subseteq \realpos$ of time points, we say that a timed path is a $T$-strategy if it is a $T$-path. Then, a profile  $P$ is a $T$-profile if it consists 
only 
of $T$-strategies. 
Let $t_{max} = \max(T)$. For $t \in T$ 
such that $t<t_{max}$, 
let $\nextT(t)$ be the minimal time point in $T$ that is strictly larger than $t$. We partition the interval $[0,t_{\max}]$ into a set $\Upsilon$ of sub-intervals
$[m,\nextT(m)]$ for every $m \in (T \cup \{0\}) \setminus \set{t_{max}}$.
We refer to the sub-intervals in $\Upsilon$ as {\em periods}.
Suppose $T$ is the minimal set such that $P$ is a $T$-profile. Note that $\Upsilon$ is the coarsest partition of $[0,t_{\max}]$ into periods such that no player crosses an edge within a period in $\Upsilon$. We denote this partition by $\Upsilon_P$.

For a player $i \in [k]$ and a period $\gamma \in \Upsilon_P$,
let $\visits_P(i,\gamma)$ be the vertex that Player $i$ visits during period $\gamma$.
That is, if $\pi_i=\zug{v^i_0, t^i_{0}}, \dots, \zug{v^i_{n_i-1}, t^i_{n_i-1}},v^i_{n_i}$ is a legal timed path that is a strategy for Player~$i$ and $\gamma = [m_1, m_2]$, then $\visits_P(i,\gamma)$ is the vertex $v_j^i$ for the index $1 \le j < n_i$ such that 
$\tau^i_{j}  \leq m_1 \leq m_2 \leq \tau^i_{j+1}$,
and $\visits_P(i,\gamma)$ is the vertex $v_0^i$ if 
$0 = m_1 \leq m_2 \leq \tau^i_1$.
Note that since $P$ is a $T$-profile, 
for each period $\gamma \in \Upsilon_P$,
the number of players that stay in each vertex $v$ during $\gamma$ is fixed.
Let $\load_P(v,\gamma)$ denote this number.
Formally $\load_P(v,\gamma) = |\{i: \visits_P(i,\gamma) = v\}|$.
Finally, for a period $\gamma = [m_1, m_2]$, let $|\gamma| = m_2 - m_1$ be the \emph{duration} of $\gamma$.
Suppose Player~$i$'s path ends at time $\tau^i$.
Let $\Upsilon^i_P \subseteq \Upsilon_P$ denote the periods that end by time $\tau_i$. 

Recall that the latency function
$\ell_v:[k] \longrightarrow \Q_{\ge 0}$ 
maps the number of players that simultaneously visit vertex $v$
to the price that each of them pays per time unit. If $\visits_P(i,\gamma)=v$,
then the cost of Player $i$ in $P$, over the period $\gamma$ is
$cost_{\gamma, i}(P) = \ell_v(\load_P(v,\gamma)) \cdot |\gamma|$. We define $cost_i(P) = \sum_{\gamma \in \Upsilon^i_P} cost_{\gamma, i}(P)$. The cost of the profile $P$, denoted $cost(P)$, is the total cost incurred by all the players,
i.e., $cost(P) = \sum_{i=1}^k cost_i(P)$.

A $T$-strategy is called an \emph{integral strategy} when $T \subseteq \Nat$,
and similarly for \emph{integral} profile.

A profile $P=\zug{\pi_1, \dots, \pi_k}$ is said to \emph{end} by time $\tau$
if for each $i \in [k]$, the strategy $\pi_i$ ends by time $\tau$.
Consider a TNG $\T$ that has a cycle such that a clock $x$ of $\T$ is reset on the cycle.
It is not difficult to see that this may lead to $\T$ having infinitely many integral profiles that end by different times.
A TNG $\T$ is called \emph{global} if it has a single clock $x$
that is never reset.
We use GTNG to indicate that a TNG is global.

As in RAGs, we distinguish between cost-sharing TNGs 
that have cost-sharing latency functions and congestion TNGs 
in which the latency functions are non-decreasing.

\stam{
If a guard in a GTNG is of the form $x \ge m_1 \wedge x \le m_2$,
where $m_1, m_2 \in \Nat$, then we often write it as the interval $[m_1, m_2]$.
We refer to $m_1$ and $m_2$ as the \emph{start} and the \emph{end} interval boundaries respectively.
If a guard is of the form $x \ge m$, then
we write it as an interval $[m, \infty)$.
Note that since a clock is never reset in GTNGs, there can only be finitely many boundary profiles in every GTNG.
In a GTNG $\T$, we use  $B_G(\T)$ to represent the set of interval boundaries appearing
in the guards of $\T$.
}

\stam{
\begin{example}
\label{first example}
\stam{
Consider an automobile service center with three stations: ($s$) tuning engine, ($a$) tire and air check, and ($w$) dry and wet wash. The costs for operating the stations per one time unit are $8$, $4$, and $6$, respectively, and they are independent of the number of cars served. Accordingly, cost is shared by the users. There are two billing  counters, $u_1$ and $u_2$, for 
registered and dropped-in cars respectively.
}
Consider a car manufacturing assembly-line, in which there are two stations $a$ and $w$, where $a$ is used for different installation operations like engine, hood or wheels installation etc. while $w$ is used for painting, quality control and testing gears etc. It is not necessary that all the jobs in each station will be done in a sequence and the cars under production may have to alternate between stations to get various jobs done. There are two exits $u_1$ and $u_2$ that correspond to different dealers who have ordered specific cars. For simplicity, we consider two cars under production in the assembly-line and show how the costs of operating the stations are shared by them.

The setting is modeled by the 
cost-sharing TNG
below, which has two clocks: $x$ and $y$. As seen in the TNG, after spending some time in $s$, the cars can alternate between stations $w$ and $a$.
Moving from $s$ to $a$ can be done anytime after $2$ time units, and moving to $w$ can be done at time $2$ exactly. In both transitions, the clock $x$ is reset. Hence a car that reaches $a$ from $s$ or $w$ may spend at most $4$ time units in $a$ before it moves to $w$, or it may stay in $a$ until the time that elapses since its arrival 
to the service center
is exactly $13$, when it may move to the billing counter $u_1$. Assume that two cars arrive to the center. Player 1 is a registered car and Player 2 is a drop-in car, and consider the profile $P$ in which Player~1 chooses the timed path
$(s,2),(a,4),(w,3),(a,4),u_1$ and Player~2 chooses the timed path
$(s,2),(a,4),(w,3),(a,4),(w,1),u_2$.

\noindent
\begin{minipage}{7.5cm}
\vspace*{0.2cm}
Player~$1$'s cost in $P$ is
$8/2 \cdot 2 + 4/2 \cdot 4 + 6/2 \cdot 3 + 4/2 \cdot 4 = 33$.
and Player 2's cost is
$8/2 \cdot 2 + 4/2 \cdot 4 + 6/2 \cdot 3 + 4/2 \cdot 4 + 6 \cdot 1 = 39$.
Another possible profile in this game is $P'$, in which the strategies of the two players are
$(s,2),(w,4),(a,0),(w,4),(a,0),(w,3),(a,0),u1$ and
$(s,2),(w,4),(a,0),(w,4),(a,0),(w,4),u2$.
Now, the costs are $41$ and $47$, respectively.
\hfill \qed
\end{minipage} 
\begin{minipage}{4.5cm}
\includegraphics[width=1\textwidth]{fig_example_w_clks.pdf}
\end{minipage}
\vspace*{0.2cm}
\end{example}
}

\subsection{Stability and efficiency} 
Consider a game $G$.
For a profile $P$ and a strategy $\pi$ of player $i \in [k]$,
let $P[i \leftarrow \pi]$ denote the profile obtained from $P$
by replacing the strategy of Player $i$ in $P$ by $\pi$.
A profile $P$ is said to be a {\em (pure) Nash equilibrium\/} (NE)
if none of the players in $[k]$
can benefit from a unilateral deviation from her
strategy in $P$ to another strategy.
Formally, for every Player~$i$ and every strategy $\pi$ for Player $i$,
it holds that $cost_i (P[i \leftarrow \pi]) \ge cost_i(P)$.
\stam{
Given a profile $P = \zug{\pi_1, \dots, \pi_k}$, a \emph{best response} (BR, for short) of a player, say Player~$k$, in profile $P$ is a strategy $\pi'_k$ such that for all strategies $\pi$ of Player~$k$,
we have that $cost_k(P[k \leftarrow \pi'_k]) \le cost_k(P[k \leftarrow \pi])$.
Given a profile $P = \zug{\pi_1, \dots, \pi_k}$ that is not an NE, a \emph{better response} of a player, say Player~$k$, from profile $P$ is a strategy $\pi'_k$ such that $cost_k(P[k \leftarrow \pi'_k]) < cost_k(P)$.
}

A {\em social optimum\/} (SO) of a game $G$ is
a profile that attains the infimum cost over all profiles.
We denote by $SO(G)$ the cost of an SO profile; i.e., $SO(G) = \inf_{P \in \Prof(G)} cost(P)$.
It is well known that decentralized decision-making may lead to sub-optimal solutions from the
point of view of the society as a whole. 
We quantify the inefficiency incurred due to self-interested behavior by the
\emph{price of anarchy} (PoA) \cite{KP09b,Pap01} and \emph{price of
stability} (PoS) \cite{AD+08} measures. The PoA is the worst-case inefficiency of a Nash equilibrium, while the PoS measures the best-case inefficiency of a Nash equilibrium. Note that unlike resource allocation games in which the set of profiles is finite, in TNGs there can be uncountably many NEs, so both PoS and PoA need to be defined using infimum/supremum rather than min/max. Formally,

\begin{definition}
Let $\mathcal{G}$ be a family of games, and let $G \in \G$ be a game in $\mathcal{G}$.
Let $\rm \Gamma(G)$ be the set of Nash equilibria of the game $G$. Assume that $\rm \Gamma(G)\neq \emptyset$.
\begin{itemize}
\item The {\em price of anarchy} of $G$ is 
$PoA(G) = \sup_{P\in \rm \Gamma(G)} cost(P)/SO(G)$.
The {\em price of anarchy} of the family of games $\mathcal{G}$
is $PoA(\mathcal{G}) = sup_{ G\in \mathcal{G}}PoA(G)$.
\item
The {\em price of stability} of $G$ is 
$PoS(G) = \inf_{P\in \rm \Gamma(G)} cost(P)/SO(G)$.
The {\em price of stability} of the family of games $\mathcal{G}$ is
$PoS(\mathcal{G}) = sup_{ G\in \mathcal{G}}PoS(G)$.
\end{itemize}
\end{definition}

\stam{
In Example \ref{first example}, profile $P$ is an SO, and is also a (best) NE,
while profile $P'$ is a worst NE. Note that there can be uncountably many NEs in the TNG in Example \ref{first example}.
Indeed, for all $t \in [0,1]$, the profile $P'_t$ with the strategies
$(s,2),(a,4),(w,3+t),(a,4-t),u_1$ and
$(s,2),(a,4),(w,3+t),(a,4),(w,1-t),u_2$
is an NE with costs $8/2 \cdot 2 + 4/2 \cdot 4 + 6/2 \cdot (3+t) + 4/2 \cdot (4-t) = 33+t$ and
$8/2 \cdot 2 + 4/2 \cdot 4 + 6/2 \cdot (3+t) + (4/2 \cdot (4-t) + 4 \cdot t) + 6 \cdot (1-t) = 39-t$, respectively.
}

\stam{
\paragraph{Abstract boundary-valuations}
Consider a TNG $\T = \zug{k, C, V, E, \{r_v\}_{v \in V}, \zug{s_i, u_i}_{i \in [k]}}$.
An \emph{integer clock valuation} is an assignment $\kappa: C \rightarrow \Nat$.
The range $\Nat$ of $\kappa$ is infinite and we use an abstraction in order to make it finite.
For this purpose, we define the set of \emph{boundary values} of $\T$ as $B(\T) = \{0\} \cup [c] \cup \{\top\}$ where
$c$ is the maximum constant appearing in the guards in $\T$
and $\top$ is a special symbol standing for all integer values greater than $c$.

An \emph{abstract boundary valuation} is then an assignment $\beta: C \rightarrow B(\T)$.
The set of all abstract boundary valuations is denoted by $\mathcal{V}(\T)$.
 We denote by $\beta_0$ the abstract boundary valuation such that for each clock $x \in C$, we have that
$\beta_0(x) = 0$. We call $\beta_0$ the \emph{initial abstract boundary valuation}.
We say an abstract boundary valuation $\beta$
satisfies a guard $g$, denoted $\beta \models g$, if the expression obtained from $g$ by replacing each clock $x \in C$ in $g$ with the value $\beta(x)$ and then replacing all occurrences of $\top$ with $\infty$, is valid.

We now define the \emph{abstract boundary-valuation graph} $G_i$ of $\T$ for some Player~$i$, where $i \in [k]$.
Intuitively, graph $G_i$ is a finite graph that captures only those strategies of Player~$i$ in which the edges in $E$ of $\T$ are traversed at integer times.
The vertices of $G_i$ are of the form $(v, \beta)$, where $v \in V$ and $\beta \in \mathcal{V}(\T)$.
The use of abstract boundary valuation makes $G_i$ finite.
The edges of $G_i$ are defined using two kinds of successor functions: $succ_t$ and $succ_e$.
The \emph{time-successor} function $succ_t: \mathcal{V}(\T) \rightarrow \mathcal{V}(\T)$ is defined as follows.
Consider an abstract boundary valuation $\beta$.
Then $succ_t(\beta) = \beta'$,
where for each $x \in C$, we have $\beta'(x) = \beta(x) + 1$ if $\beta(x) \notin \{c, \top\}$, else $\beta(x) = \top$.
Consider an edge $e = (l, g, R, l') \in E$.
The \emph{e-successor} function $succ_e: \mathcal{V}(\T) \rightarrow \mathcal{V}(\T)$ is defined as follows.
Consider an abstract boundary valuation $\beta$ that satisfies $g$.
Then $succ_e(\beta) = \beta'$,
where for each $x \in C$, we have $\beta'(x) = 0$ if $x \in R$,
else $\beta'(x) = \beta(x)$.
Given a Player~$i \in [k]$, and a TNG $\T$, the abstract boundary-valuation graph
is $G_i = \zug{V', E', s_i', U_i'}$, where $V' = V \times \mathcal{V}(\T)$ is a set of nodes, $E' \subseteq V' \times V'$ is a set of transitions such that for every $(v, \beta) \in V'$, the transition
$((v, \beta), (v, succ_t(\beta)))$ is in $E'$ and for every $(v, \beta) \in V'$ and for every edge $e = (v, g, R, v') \in E$ such that $\beta \models g$, the transition $((v, \beta), (v', succ_e(\beta)))$ is in $E'$,
the initial node $s'_i$ is $(s_i, \beta_0)$ where $s_i$ is the source of Player~$i$ in $\T$ and
a set $U_i' = \{(u_i, \beta) \in V' \}$
of final nodes, where $u_i$ is the target of Player~$i$ in $\T$ and $\beta$ is some abstract boundary valuation.
We note that an abstract boundary valuation graph is a finite graph.
Analogously, we also define a \emph{general abstract boundary-valuation graph}
$G = \zug{V', E', \{s'_{i, i \in [k]}\}, \cup \: U'_{i, i \in [k]} }$, where $s'_i$ and $U'_i$ are the initial node and the set of final nodes for Player~$i$ respectively.

\stam{
\begin{example}
Consider the TNG $\T$ appearing in Figure \ref{fig-example}.
When drawing TNGs, if the guard on an edge $e$ is {\it true}, we do not specify it.
The game is played between two players, with objectives $\zug{s_1,u}$ and $\zug{s_2,u}$, respectively.
The game is a uniform cost-sharing game, and
the rate for each vertex is indicated in the figure.
The social optimum is the profile
${\rm SO}=\langle \pi_1, \pi_2 \rangle$, where
the strategy of \PO is $\pi_1 = \zug{v_1,2}\zug{v_2,3}\zug{v_1,0}v_3$,
and the strategy of \PT is $\pi_2 = \zug{v_2,5}\zug{v_1,0}v_3$.

The cost of \PO in SO is $6 \cdot 2 + 0.5 \times 3 = 13.5$.
The cost of \PT is $1 \times 2 + 0.5 \cdot 3 = 3.5$.
Note that during the period $[2, 5]$, the vertex $s_2$
is shared by \PO and \PT.
The profile SO is also a Nash equilibrium.
\end{example}
}
}

\section{The Best-Response and the Social-Optimum Problems}
Consider a TNG $\T =  \zug{k, C, V, E, \{\ell_v\}_{v \in V}, \zug{s_i, u_i}_{i \in [k]}}$. In the {\em best-response problem} (BR problem, for short), we ask how a player reacts to a choice of strategies of the other players. Formally, let $\pi_1, \ldots, \pi_{k-1}$ be a choice of 
integral\footnote{We choose integral strategies since strategies with irrational times cannot be represented as part of the input; for strategies that use rational times, the best response problem can be solved with little modification in the proof of Theorem \ref{thm:BR-TNG-to-PTA}.}
strategies for Players~$1,\ldots,k-1$ in $\T$. 
We look for a strategy $\pi_k$ that minimizes $cost_k(\zug{\pi_1,\ldots, \pi_k})$. The choice of allowing Player~$k$ to react is arbitrary and is done for convenience of notation. In the {\em social optimum problem} (SOPT problem, for short), we seek a profile that maximizes the social welfare, or in other words, minimizes the sum of players' costs.

In this section we describe \emph{priced timed automata} (PTAs, for short) \cite{ALP01,BFHLPRV01} and show that while they are different from TNGs both in terms of the model and the questions asked on it, they offer a useful framework for reasoning about TNGs. In particular, we solve the BR and SOPT problems by 
reductions
to problems about PTAs.

\subsection{From TNGs to priced timed automata}
\label{TNGs to PTAs}
A PTA \cite{ALP01,BFHLPRV01} is $\P = \zug{C, V, E, \set{r_v}_{v \in V}}$, where $\zug{C, V, E}$ is a timed network and $r_v \in \Q_{\ge 0}$ 
is the {\em rate} of vertex $v \in V$. Intuitively, the rate $r_v$ specifies the cost of staying in $v$ for a duration of one time unit. Thus, a timed path 
$\eta = \zug{v_0, t_0}, \ldots, \zug{v_n,t_n},v_{n+1}$ in a PTA has a price, denoted $price(\eta)$, which is $\sum_{0 \leq j \leq n} r_v \cdot t_v$. The size of $\P$ is $|V| + |E|$ plus the number of bits needed in the binary encoding of the numbers appearing in guards and rates in $\P$.
\footnote{In general, PTAs have rates on transitions and strict time guards, which we do not need here.}

Consider a PTA $\P$ and two vertices $s$ and $u$.  Let $paths(s, u)$ be the set of timed paths from $s$ to $u$. We are interested in cheapest timed paths in $paths(s, u)$. 
A priori, there is no reason to assume that the minimal price is attained, thus we are interested in the optimal price, denoted $opt(s, u)$, which we define to be $\inf\set{price(\eta) : \eta \in paths(s, u)}$. The corresponding decision problem, called the \emph{cost optimal reachability problem} (COR, for short) takes in addition a threshold $\mu$, and the goal is to decide whether $opt(s, t) \leq \mu$. Recall that we do not allow the guards to use the operators $<$ and $>$.

\begin{theorem}
\label{thm:PTA}
\cite{BBBR07,FJ13} The COR problem is PSPACE-complete for PTAs with two or more clocks. 
Moreover, the optimal price is attained by an integral path, i.e., there is an integral path $\eta \in paths(s, u)$ with $price(\eta) = opt(s, u)$.
\end{theorem}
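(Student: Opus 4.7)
The PSPACE-hardness is the easy direction: I would reduce from reachability in timed automata with two clocks, which is PSPACE-hard by the result of Alur and Dill. Given a timed automaton $\A = \zug{C,V,E}$ with source $s$ and target $u$, view it as a PTA $\P$ by assigning rate $0$ to every vertex. Then a timed path in $\A$ from $s$ to $u$ exists if and only if $opt(s,u) = 0$, so COR with threshold $\mu = 0$ captures reachability exactly.

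For the PSPACE upper bound together with the integrality claim, the plan is to combine the classical region-automaton construction with a linear-programming argument on a fixed edge sequence. Fix a sequence of edges $e_1,\ldots,e_n$ forming a path from $s$ to $u$, and write $r_{j,x}$ for the index of the most recent edge among $e_1,\ldots,e_{j-1}$ that resets clock $x$ (or $0$ if none). The value of clock $x$ just before traversing $e_j$ equals $\sum_{i=r_{j,x}}^{j-1} t_i$, where $t_i$ is the delay spent in the $i$-th vertex. Each atomic clock constraint $x \sim m$ in the guard $g_j$ therefore unfolds into a linear constraint of the form $\sum_{i \in I} t_i \sim m$ whose support $I$ is a \emph{contiguous window} of delay indices. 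The feasible delay set $\Pi \subseteq \realpos^{n+1}$ is thus a closed rational polyhedron (closed because guards exclude $<$ and $>$), and the cost is the nonnegative linear function $\sum_j r_{v_j}\, t_j$.

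The key combinatorial observation is that a matrix whose rows are indicator vectors of contiguous intervals is totally unimodular; together with the integrality of the right-hand sides, this forces every vertex of $\Pi$ to be an integer point. Nonnegativity of the cost function guarantees that the infimum over $\Pi$ is finite and attained, and by standard LP theory it is attained at a vertex; hence at an integer delay vector. Since the edge sequences realizable by paths in $\P$ are captured by the (exponentially large) region automaton of Alur--Dill, minimizing over all such sequences produces an integer optimal path, giving the integrality statement.

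Finally, for the PSPACE bound I would run an NPSPACE procedure that, starting at $s$, repeatedly guesses either an integer delay or an outgoing edge, maintains the current region (represented succinctly, as in Alur--Dill) together with the current clock valuation and a running cost, and accepts upon reaching $u$ with cost at most $\mu$; Savitch's theorem then yields PSPACE. The main obstacle here is the precision and length control: the explored path may be exponentially long in the region graph, and a priori the running cost could have unbounded bit-length. The integrality argument of the previous paragraph is what tames this --- the optimal path can be taken to have integer delays bounded by a polynomial in $|V|$, the number of clocks, and the largest constant in the guards, so the running cost fits in polynomial space throughout the simulation.
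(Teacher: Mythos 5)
First, a framing note: the paper does not prove this theorem at all --- it is imported from the literature (the citations \cite{BBBR07,FJ13} are the proof) --- so your proposal is being measured against those sources rather than against an in-paper argument. The genuine gap is in your hardness direction. You reduce from reachability in \emph{two-clock} timed automata and assert its PSPACE-hardness is ``the result of Alur and Dill''. It is not: Alur--Dill establish PSPACE-hardness only when the number of clocks grows with the instance; with one clock reachability is NLOGSPACE-complete, and for exactly two clocks PSPACE-hardness was open for roughly two decades and is precisely the content of the second citation \cite{FJ13} (Fearnley--Jurdzi\'nski). This is why the theorem carries that citation and why the statement is careful to say ``two or more clocks''. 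Your reduction itself (zero rates, threshold $\mu=0$) is fine, but it transports hardness from a problem whose hardness you have not established; as written, your argument only yields PSPACE-hardness for PTAs with an unbounded number of clocks.

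For the upper bound and integrality you take a route that genuinely differs from the corner-point abstraction of \cite{BBBR07}: fix the edge sequence, observe that each guard unfolds into a linear constraint whose support is a contiguous window of delay variables, and invoke total unimodularity of interval matrices together with integrality of the right-hand sides to get an integral optimal delay vector. This is sound and arguably more self-contained than the corner-point argument, but two steps need care. (i) You must still argue that the infimum over the infinitely many edge sequences is attained; the clean way is to note that any integral path can be shortened, without increasing its nonnegative cost, by excising loops between repeated pairs of a vertex and a $\chi$-truncated clock valuation --- exactly the pumping argument the paper uses in Lemma~\ref{lem:PTA-COR-time} --- which reduces the search to a finite set of sequences. (ii) The resulting bound on the optimal path's duration is of the order $|V|\cdot(\chi+2)^{|C|}$, which is exponential in $|C|$ and in the binary encoding of $\chi$, not ``polynomial in $|V|$, the number of clocks, and the largest constant'' as you claim; what saves the NPSPACE simulation is only that this quantity has polynomially many \emph{bits}, so the truncated valuation and the running cost fit in polynomial space.
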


\stam{
\begin{remark} \label{remark PTA}
We can consider PTAs where the guards on the edges have clocks compared to rationals instead of integers.
Considering such a PTA $\P$ with rationals appearing in the guards, we can multiply them by the LCM of their denominators and for every vertex $v$, we divide the rate $r_v$ by the same and obtain a PTA $\P$ where the clocks are compared with integers in the guards and corresponding to every path $\eta$ in $\P$ there is a path $\eta'$ in $\P'$ such that $price(\eta) = price(\eta')$ and vice versa.
\end{remark}
}

In Sections~\ref{brp} and~\ref{soptp} below, we reduce problems on TNGs to problems on PTAs. The reductions allow us to obtain properties on strategies and profiles in TNGs using results on PTAs, which we later use in combination with techniques for NGs in order to solve problems on TNGs. 

\subsection{The best-response problem}
\label{brp}
\begin{theorem}
\label{thm:BR-TNG-to-PTA}
Consider a TNG $\T$ with $n$ clocks and 
integral
strategies $\pi_1, \ldots, \pi_{k-1}$ for Players~$1,\ldots,k-1$. There is a PTA $\P$ with $n+1$ clocks and two vertices $v$ and $u$ such that there is a one-to-one cost-preserving correspondence between strategies for  Player~$k$ in $\T$ and timed paths from $v$ to $u$: for every strategy $\pi_k$ in $\T$ and its  corresponding path $\eta$ in $\P$, we have $cost_k(\zug{\pi_1, \ldots, \pi_k}) = price(\eta)$. 
\end{theorem}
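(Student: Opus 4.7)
The plan is to build a PTA $\P$ that, from Player~$k$'s perspective, captures the one-player optimization she faces given the fixed integral strategies of Players~$1,\ldots,k-1$. Since those strategies are integral, the set of all their transition and termination times is a finite subset $T = \set{0 = t_0 < t_1 < \cdots < t_m} \subseteq \Nat$ that partitions time into periods $\gamma_j = [t_{j-1}, t_j]$ for $1 \le j \le m$, together with a final open-ended period $\gamma_{m+1} = [t_m, \infty)$. Within each $\gamma_j$, none of Players~$1,\ldots,k-1$ moves, so the contribution $L_{v,j}$ of these players to the load at each vertex $v$ is a constant that I can read off from the fixed profile; in $\gamma_{m+1}$ all of them have already terminated, so $L_{v, m+1} = 0$.

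I would then define $\P$ as follows. Its clocks are the clocks of $\T$ together with a fresh clock $y$ that is never reset, giving $n+1$ clocks; $y$ tracks global time. Its vertices include a copy $(v, j)$ for every $v \in V$ and every $j \in \set{1, \ldots, m+1}$, plus a sink vertex $u$ of rate~$0$. The rate of $(v, j)$ is set to $\ell_v(L_{v, j} + 1)$, which is exactly what Player~$k$ pays per time unit while sitting in $v$ during $\gamma_j$. There are three kinds of PTA transitions: for every edge $e = (v, g, R, v')$ of $\T$ and every $j$, a \emph{move} edge $(v, j) \to (v', j)$ with guard $g \wedge t_{j-1} \le y \le t_j$ (replaced by $g \wedge y \ge t_m$ when $j = m+1$) and reset $R$, allowing Player~$k$ to take $e$ anywhere inside $\gamma_j$; for every $v$ and every $j \le m$, a \emph{period-transition} edge $(v, j) \to (v, j+1)$ with guard $y = t_j$ and no reset, updating the load bookkeeping at the boundary; and for every $j$, a \emph{stop} edge $(u_k, j) \to u$ with guard true, representing Player~$k$ reaching her destination. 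The two designated vertices of the theorem are the source $(s_k, 1)$ and the sink $u$.

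For the correspondence, a strategy $\pi_k$ for Player~$k$ in $\T$ maps to the PTA path that mimics its sequence of waits and edge-crossings, inserting a period-transition edge at every boundary time $t_j$ reached before $\pi_k$ ends at which $\pi_k$ is not itself crossing an edge. Conversely, every PTA path from $(s_k, 1)$ to $u$ projects back to a legal timed path in $\T$ by dropping the period-transition and stop edges and forgetting $y$. Strict bijectivity, in the knife-edge case where $\pi_k$ crosses a $\T$-edge exactly at some $t_j$, is enforced by the convention that the $\T$-edge is taken in period $j$ and is immediately followed by the period-transition into period $j+1$; this can be imposed by a minor refinement of guards, for instance via an auxiliary clock reset on every move edge and required to be positive on the next move edge, breaking the symmetry at $y = t_j$.

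Cost preservation is then immediate from the choice of rates: while a PTA path sits in $(v, j)$ for duration $d$, its price contribution is $\ell_v(L_{v,j} + 1) \cdot d$, which is exactly Player~$k$'s contribution to $cost_k$ from vertex $v$ during that sub-interval of $\gamma_j$; the move, period-transition, and stop edges contribute nothing since they are instantaneous. Summing over the whole path yields $price(\eta) = cost_k(\zug{\pi_1, \ldots, \pi_k})$. The main obstacle I foresee is precisely the bijectivity guarantee at boundary moments described above; this is a bookkeeping issue rather than a conceptual one, and, when combined with Theorem~\ref{thm:PTA}, is what will subsequently let the best-response problem be reduced to COR on $\P$.
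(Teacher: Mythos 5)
Your construction is essentially the paper's own: the same finite partition of time into periods induced by the opponents' integral strategies, the same product of $\T$ with these periods using a fresh never-reset global clock to gate the copy-to-copy transitions at the boundary times, and the same rate $\ell_v(L_{v,j}+1)$ encoding Player~$k$'s marginal per-unit cost. The boundary-time bookkeeping you flag is a real but harmless looseness that the paper also leaves implicit (it does not affect prices, since edge crossings are instantaneous and free), so your proposal matches the intended proof.
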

\begin{proof}
Consider a TNG $\T = \zug{k, V, E, C, \{\ell_v\}_{v \in V}, \zug{s_i, u_i}_{i \in [k]}}$, where $C = \{x_1, \dots, x_m\}$. Let $Q = \zug{\pi_1,\ldots, \pi_{k-1}}$ be a choice of timed paths for Players $1,\ldots,k-1$. 
Note that $Q$ can be seen as a profile in a game that is obtained from $\T$ by removing Player~$k$, and we use the definitions for profiles on $Q$ in the expected manner. 
Let $T \subseteq \Q$ be the minimal set of time points for which all the strategies in $Q$ are $T$-strategies. Consider two consecutive time points $a,b \in T$, i.e., there is no $c \in T$ with $a < c < b$. Then, there are players that cross edges at times $a$ and $b$, and no player crosses an edge at time points in the interval $(a,b)$. Moreover, let $t_{max}$ be the latest time in $T$, then $t_{max}$ is the latest time at which a player reaches her destination. Let $\Upsilon_Q$ be a partition of $[0,t_{max}]$ according to $T$. We obtain $\Upsilon'_Q$ from $\Upsilon_Q$ by adding the interval $[t_{\max}, \infty)$.

A key observation is that the load on all the vertices is unchanged during every interval in $\Upsilon'_Q$. For a vertex $v \in V$ and  $\delta \in \Upsilon_Q$, the cost Player~$k$ pays per unit time for using $v$ in the interval $\delta$ is $\ell_v(load_Q(v, \delta)+1)$. On the other hand, since all $k-1$ players reach their destination by time $t_{max}$, the load on $v$ after $t_{max}$ is $0$, and the cost Player~$k$ pays for using it then is $\ell_v(1)$. 

The PTA $\P$ that we construct has $|\Upsilon'_Q|$ copies of $\T$, thus its vertices are $V \times \Upsilon'_Q$.
Let $\delta_0 = [0, b] \in \Upsilon'_Q$ be the first interval. We consider paths from the vertex $v=\zug{s_k,\delta_0}$, which is the copy of Player~$k$'s source in the first copy of $\T$, to a target $u$, which is a new vertex we add and whose only incoming edges are from vertices of the form $\zug{u_k, \delta}$, namely, the copies of the target vertex $u_k$ of Player~$k$. We construct $\P$ such that each such path $\eta$ from $v$ to $u$ in $\P$ corresponds to a legal strategy $\pi_k$ for Player~$k$ in $\T$, and such that $cost_k(\zug{\pi_1,\ldots, \pi_{k-1}, \pi_k}) = price(\eta)$. The main difference between the copies are the vertices' costs, which depend on the load as in the above. We refer to the $n$ clocks in $\T$ as {\em local clocks}. In each copy of $\P$, we use the local clocks and their guards in $\T$ as well as an additional {\em global} clock that is never reset to keep track of global time. Let $\delta = [a,b] \in \Upsilon_Q$ and $\delta' = [b,c] \in \Upsilon'_Q$ be the following interval. Let $\T_\delta$ and $\T_{\delta'}$ be the copies of $\T$ that corresponds to the respective intervals. The local clocks guarantee that a path in $\T_\delta$ is a legal path in $\T$. The global clock allows us to make sure that (1) proceeding from $\T_\delta$ to $\T_{\delta'}$ can only occur precisely at time $b$, and (2) proceeding from 
$\zug{u_k,\delta}$
in $\T_{\delta}$ to the target $u$ can only occur at a time in the interval $\delta$.
\stam{
Note that in the PTA we construct above, the global clock is compared to rationals in $T$. By Remark \ref{remark PTA}, we can obtain a PTA in which the global clocks are compared with integers and the correspondence of costs between the sets of paths of the PTAs is preserved.
}
\stam{
Consider a TNG $\T = \zug{k, V, E, C, \{\ell_v\}_{v \in V}, \zug{s_i, u_i}_{i \in [k]}}$, where $C = \{x_1, \dots, x_m\}$. Let $Q = \zug{\pi_1,\ldots, \pi_{k-1}}$ be a choice of timed paths for Players $1,\ldots,k-1$. Note that $Q$ can be seen as a profile in a game that is obtained from $\T$ by removing Player~$k$, and we use the definitions for profiles on $Q$ in the expected manner. 
Let $T = \{\tau_1, \dots, \tau_n\}$, where $\tau_1 < \tau_2<\ldots < \tau_n$
be the set of time points at which the players traverse edges in $Q$.
Let $t_{max} = \tau_n$ be the latest time in $T$, i.e., $t_{max}$ is the latest time at which a player reaches her destination, and let $\Upsilon_Q$ be a partition of $[0,t_{max}]$ according to $T$. That is, for every interval $\gamma = [a,b] \in \Upsilon_Q$, there are players that cross edges at times $a$ and $b$, and no player crosses an edge at time points inside $\gamma$. 

A key observation is that the load on all the vertices is unchanged in intervals in $\Upsilon_Q$, and we can thus deduce the cost Player~$k$ pays for staying in a vertex in such intervals.
Formally, suppose 
that during an interval $\delta$, 
Player~$k$ chooses a strategy that stays in a vertex $v \in V$.
If $\delta \subseteq \gamma$ for some $\gamma \in \Upsilon_Q$, then the cost Player~$k$ pays for this duration is $|\delta| \cdot \ell_v(load_Q(v, \gamma)+1)$. On the other hand, if $\delta$ starts after $t_{max}$, then all players reach their destinations and only Player~$k$ contributes load to $v$, thus her cost for this duration is $|\delta| \cdot \ell_v(1)$. Intuitively, a path $\pi_k$ that Player~$k$ chooses can be broken into subpaths according to the partition $\Upsilon_Q$.

We construct a PTA $\P$ such that the cost of the best response of Player $k$ equals the cost of an optimal path in the PTA $\P$.
We consider some vertex $v \in V$ of the TNG $\T$.
Note that the load on $v$ during different intervals of the partition $\Upsilon_Q$ may be different.
Hence if the best response of Player $k$ visits vertex $v$ in the TNG $\T$ during different intervals in $\Upsilon_Q$, then the costs of staying in $v$ per unit time during these different intervals are also different.
To simulate this in the PTA, we have $n+1$ copies of $v$ in $\P$.
The last copy of $v$ is used if the best response of Player $k$ does not end by $t_{\max}$.
The first copy of $v$ is used only in the interval $[0, \tau_1]$, the second copy may be used in the interval $[\tau_1, \tau_2]$, and so on.
Since $v$ is an arbitrary vertex in $V$, we make $n+1$ copies of every vertex in $V$.
Similarly we copy the edges of $\T$ in $\P$ to enable the transitions between vertices during different intervals.
Thus we essentially make $n+1$ copies of the TNG $\T$ to construct the PTA.
For every $1 \le j \le n+1$, copy $j$ of $\T$ may be used in the best response of Player $k$ only in the interval $[\tau_{j-1}, \tau_j]$, where $\tau_0 = 0$.
We denote the $n+1$ copies by $\T_1,\ldots, \T_n, \T_{n+1}$.
Two distinguished vertices in $\P$ are $s_k$, which is in $\T_1$, and $u_k$, which is not in any of the copies. 
A timed path in $\P$ that starts from $s_k$, enters the copy $\T_j$, for $1 < j \leq n+1$, only at time $\tau_{j-1}$.
For all copies, the path can exit a copy either by proceeding to the target $u_k$, or with the exception of the last copy, by proceeding to the succeeding copy.
We need a fresh clock that is never reset to enforce that exactly at time $\tau_j$, Player $k$ can move from copy $j-1$ to copy $j$ of $\T$ in $\P$.
Further, the legality of a path in each copy of $\T$in $\P$ is enforced using the clocks in $\T$ and the guards on the edges of $E$.
We describe the details of the construction in Appendix~\ref{app:BR-TNG-to-PTA}.
}

We now formalize the intuition of the reduction given above.
We define $\P = \zug{V', E', C \cup \set{x_{n+1}}, \set{r_v}_{v \in V'}}$, where 
$V' = (V \times \Upsilon'_Q) \cup \{u\}$, 
and $E' = E'_l \cup E'_i \cup E'_t$, where
$E'_l$ is a set of \emph{external} edges, $E'_i$ is a set of \emph{internal} edges
and $E'_t$ is a set of \emph{target} edges.
Let $\delta_1, \dots, \delta_{|\Upsilon'_Q|}$ be the set of intervals arranged according to increasing $\inf(\delta_j)$, i.e., for all $j \in [|\Upsilon'_Q|-1]$, we have that $\inf(\delta_j) < \inf(\delta_{j+1})$.
Also for every interval $\delta=\delta_j$, we represent by $next(\delta)$, the interval $\delta_{j+1}$, and let $\tau_{\delta_j}=\sup(\delta_j)$.
For each $v \in V$, and $\delta \in \Upsilon_Q$, there is an {\em external} edge of the form $\zug{\zug{v,\delta}, \set{x_{n+1} = \tau_\delta}, \emptyset, \zug{v,next(\delta)}}$, where $\tau_\delta = \sup(\delta)$.
Hence an external edge moves from copy $\delta$ to its next copy at global time $\tau_\delta$. 
The {\em internal} edges in a copy match the ones in $\T$.
Thus, 
for every $\delta \in \Upsilon_Q$,
we have an edge $e' = \zug{\zug{v,\delta},g,R,\zug{v', \delta}}$ in $E'$ iff there is an edge $e = \zug{v, g, R, v'} \in E$. 
Also the guard and the clock reset on $e'$ are exactly the same as that of $e$.
Note that clock $x_{n+1}$ is not used in the internal edges.
For each copy corresponding to $\delta \in \Upsilon_Q$, there is a \emph{target} edge from the vertex $(u_k, \delta)$ to $u$
with the guard $x_{n+1} \leq \tau_\delta$, and from the copy corresponding to the last interval $\delta_{|\Upsilon'_Q|}$,
there is an edge from $(u_k, \delta_{|\Upsilon'_Q|})$ to $u$ with the guard $x_{n+1} \ge \tau_{\delta_{|\Upsilon'_Q|-1}}$.
Finally, we define the rate of a vertex $\zug{v,i}$.
Let $l$ be the load on $v$ during time interval $\delta$, then the rate of $\zug{v,\delta}$ is $\ell_v(l+1)$.
Note that for every vertex $\zug{v,\delta_{|\Upsilon'_Q|}}$, in the $|\Upsilon'_Q|$-th copy of $\P$, the rate is $\ell_v(1)$.

We prove that the cost of the best response strategy of Player~$k$ in $\T$ is the same
as the cost of a \emph{cost optimal path} in $\P$.
We consider a strategy $\pi$ of Player~$k$ in $\T$ and let $P = \zug{\pi_1, \dots, \pi_k}$
and show that there is a path $\eta$ in $\P$ such that $cost_k(P)$ is the same as cost of the path $\eta$.
Similarly, for a path $\eta$ in $\P$, we show that there exists a strategy $\pi$ of Player~$k$ such that
again $cost_k(P)$ is the same as cost of the path $\eta$ in $\P$.

Consider the strategy $\pi = (v_1, t_1), \dots, (v_p, t_p),u_k$ of Player~$k$ in $\T$ such that $v_1 = s_k$.
We construct a timed path $\pi' = \zug{\zug{v_1,\delta_{1_1}}, t_1}, \zug{\zug{v_2,\delta_{1_1}}, t_2},$ $\ldots, \zug{\zug{v_\ell,\delta_\ell}, t_\ell}$, $\zug{\zug{v_{\ell+1}$, $\delta_{\ell+1}},0}$, $u$ in $\P$ in the following manner.
Firstly, $v_1 = s_k$, $v_{\ell+1} = u_k$ and $i_1 = 1$.
Consider the mapping $g: \realpos \mapsto \Upsilon'_Q$ such that
$g(0) = \delta_1$, and 
for $\sum_{j=1}^i t_j \leq \tau_1$, we have $g(\sum_{j=1}^i t_j) =\delta_1$ and when $\sum_{j=1}^i t_j \geq \tau_1$, for all $1 \le i \le \ell$, we have
$g(\sum_{j=1}^i t_j) = \delta_\xi$ if $\tau_{\xi-1} \le g(\sum_{j=1}^{i-1}t_j) \leq \tau_\xi$.
Intuitively, $g$ maps a global time $\tau$ to the $\xi^{th}$ copy of the TNG $\T$ in $\P$
if $\tau_{\xi-1} \le \tau \leq \tau_\xi$.
Consider $(v_i, t_i)$ in the path $\pi$ for some $i \in [p]$.
Suppose there are times $\tau_r, \tau_{r+1}, \dots, \tau_{r+h} \in T$
such that all these $h+1$ times belong to the interval $[\sum_{j=1}^{i-1} t_j, \sum_{j=1}^i t_j]$.
Then in $\pi'$, we replace $(v_i, t_i)$ by
$\zug{\zug{v_i, g(\sum_{j=1}^{i-1} t_j)}, \tau_r - \sum_{j=1}^{i-1}t_j}$,
$\zug{\zug{v_i, g(\tau_r)}, \tau_{r+1} - \tau_r}$, $\dots$,
$\zug{\zug{v_i, g(\tau_{r+l})}, \sum_{j=1}^i t_j - \tau_{r+l}}$.
Again from the definition of $f_v$, for $v \in V'$, we can see that
the cost of the timed path $\pi'$ in $\P$ is the same as $cost_k(P)$.

Now we consider the other direction.
Consider a path $\eta = \zug{\zug{v_1,\delta_1}, t_1}, \zug{\zug{v_2,\delta_2}, t_2}$, $\ldots$, $\zug{\zug{v_\ell,\delta_\ell}, t_\ell},$ $\zug{\zug{v_{\ell+1}, \delta_{\ell+1}}, 0}, u$ in $\P$ such that $v_1 = s_k$ and $v_{\ell+1} = u_k$.
We construct a path $\pi$ in $\T$ from $\pi$ as follows.
Every internal edge $(\zug{\zug{v_j,\delta_j}, t_j}, \zug{\zug{v_{j+1},\delta_{j+1}}, t_{j+1}})$ is replaced by
the edge ($\zug{v_j, t_j}, \zug{v_{j+1}, t_{j+1}}$) in $\pi$.
Note that here $i_j = i_{j+1}$.
A sequence of external edges $(\zug{\zug{v_j,i_j}, t_j}, \dots, \zug{\zug{v_{j+l},i_{j+1}}, t_{j+l}})$ 
such that $v_j = v_{j+1} = \dots = v_{j+l}$
is replaced by $(\zug{v_j, t_j+t_{j+1} + \dots +t_{j+l}})$.
Let $t_0 = 0$.
The cost along path $\eta$ in $\P$ is $r_{v_1}(l_1) \cdot t_1 + r_{v_2}(l_2) \cdot t_2 + \dots +
r_{v_\ell}(l_\ell) \cdot t_\ell$, where for $1 \le j \le \ell$,
we have $l_j = \load_P(v_j, [\sum_{q=1}^{j}t_q - \sum_{q=1}^{j-1}t_q])$.
Let $P = \zug{\pi_1, \dots, \pi_{k-1}, \pi}$ is the profile obtained from the strategies
of the $k$ players.
From the definition of $f_v$ for each $v \in V'$, it is not difficult to see that
$cost_k(P)$ is the same as the cost of the timed path $\eta$.

Note that given integral strategies of $k-1$ players, an integral path in $\P$ translates to an integral strategy of Player~$k$ in $\T$.
Since it is known that a cost optimal path in $\P$ can be an integral path \cite{BBBR07},
the best response of Player~$k$ in $\T$ is an integral strategy.
\end{proof}

We conclude with the computational complexity of the BR problem. The decision-problem variant gets as input a TNG $\T$, 
integral
strategies $\pi_1, \ldots, \pi_{k-1}$ for Players~$1,\ldots$, $k-1$, and a value $\mu$, and the goal is to decide whether Player~$k$ has a strategy $\pi_k$ such that $cost_k(\zug{\pi_1,\ldots, \pi_k}) \leq \mu$. Theorem~\ref{thm:BR-TNG-to-PTA} implies a reduction from the BR problem to the COR problem and a reduction in the other direction is easy since PTAs can be seen as TNGs with a single player.
For one-clock instances, we show that the BR problem is NP-hard by a reduction from the subset-sum problem. Note the contrast with the COR problem in one-clock instances, which is NLOGSPACE-complete \cite{LMS04}. 

\begin{theorem}
\label{thm:BR-complexity}
The BR problem is PSPACE-complete for TNGs with two or more clocks. 
For one-clock cost-sharing and congestion TNGs it is in PSPACE and NP-hard.
\end{theorem}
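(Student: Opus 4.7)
My plan has three parts corresponding to the three claims. The PSPACE upper bounds follow directly from Theorem~\ref{thm:BR-TNG-to-PTA}: it maps a TNG with $n$ clocks and a threshold $\mu$ for the BR problem, in polynomial time and preserving the threshold, to a COR instance on a PTA with $n+1$ clocks. For $n \geq 2$ we land in PTAs with at least three clocks, and for $n=1$ in PTAs with two clocks; Theorem~\ref{thm:PTA} places both in PSPACE.

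For PSPACE-hardness with two or more clocks I would go in the opposite direction: reduce COR on a two-clock PTA $\P$ to the BR problem on a two-clock single-player TNG $\T$ that reuses $\P$'s clocks, vertices, and transitions, and sets the latency function to $\ell_v(1) = r_v$. Since only one player is present, the values $\ell_v(l)$ for $l>1$ are irrelevant; the empty tuple of other-player strategies forms a valid (trivial) profile; and the single player's best response coincides with a cheapest $s$-to-$u$ path in $\P$. The threshold $\mu$ therefore carries over verbatim, and Theorem~\ref{thm:PTA} yields the lower bound.

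For NP-hardness in the one-clock case, the plan is a reduction from subset sum. Given $a_1,\ldots,a_n,t \in \Nat$, fix $M > \sum_i a_i$ and construct a one-clock TNG, with clock $x$ never reset, in which Player~$k$'s graph is a chain of $n$ gadgets $G_1,\ldots,G_n$. Gadget $G_i$ is anchored to the global time window $[(i-1)M, iM]$ through guards $x = (i-1)M$ and $x = iM$ on its entry and exit edges. Inside $G_i$, Player~$k$ chooses between an \emph{include} branch that routes her through a dedicated vertex $w_i$ during the sub-window $[(i-1)M, (i-1)M + a_i]$ (enforced by guards $x = (i-1)M$ and $x = (i-1)M + a_i$ on $w_i$'s incoming and outgoing edges) followed by a free waiting vertex until $iM$, and a \emph{skip} branch that stays in the free vertex throughout. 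Anchoring gadgets to absolute global times is critical: since there are no resets, only integer-constant guards on $x$ are expressible, and the gadgets can be made independent of Player~$k$'s prior choices only if they occupy disjoint fixed windows.

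The cost side of the construction exploits the multi-player aspect that is absent from single-player COR and is the source of the extra hardness over one-clock PTA reachability. I would install helper players with hard-wired integral strategies whose loads on the $w_i$'s---and on an additional toll vertex placed on the final edge to $u_k$---yield, under the given cost-sharing (respectively non-decreasing congestion) latency functions, a convex cost for Player~$k$ as a function of $\sum_{i \in S} a_i$, minimized exactly at $t$. A deliberately expensive alternative ``escape'' strategy from $s_k$ to $u_k$ guarantees that Player~$k$ always has at least one strategy, and picking the threshold $\mu$ just above the minimum cost makes the BR decision equivalent to the existence of a subset summing to $t$. The main obstacle is engineering this ``equality'' penalty from \emph{monotone} latency functions alone; I would do so by superposing two complementary helper-induced penalties---one whose cost grows with the total time Player~$k$ spends in the $w_i$'s and a dual one that grows with the time she does not---so their sum, over Player~$k$'s feasible choices, is minimized precisely when $\sum_{i \in S} a_i = t$.
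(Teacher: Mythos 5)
Your PSPACE upper bound (via Theorem~\ref{thm:BR-TNG-to-PTA} and Theorem~\ref{thm:PTA}) and your PSPACE-hardness argument (COR on a two-clock PTA is the BR problem for a one-player two-clock TNG with $\ell_v(1)=r_v$) are both correct and are exactly the paper's argument. The problem is in your one-clock NP-hardness reduction.

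The gap is that your anchored-gadget design makes Player~$k$'s cost a \emph{separable} function of the $n$ independent include/skip choices, and such a function cannot encode subset sum. Because every gadget $G_i$ is pinned to the absolute window $[(i-1)M, iM]$, the time at which Player~$k$ occupies any vertex is fixed independently of her earlier choices; the helpers' strategies are hard-wired, so the load she experiences in $w_i$ (and in the final toll vertex, which she always reaches at time $nM$) is a constant not depending on the subset $S$. Hence her total cost has the form $\sum_{i \in S} c_i a_i + \sum_{i \notin S} d_i$ for fixed nonnegative constants, which is minimized coordinate-wise in polynomial time. Your proposed fix --- superposing a penalty growing with the time spent in the $w_i$'s and a dual penalty growing with the time not spent there --- only produces another function that is \emph{linear} in $\sum_{i\in S} a_i$, so its minimum sits at an extreme subset, never at an interior target $t$. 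No choice of monotone or cost-sharing latencies rescues this, because the temporal coincidence between Player~$k$ and the helpers is the same for every subset.

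The missing idea (and what the paper does) is to let the subset choice determine an \emph{arrival time} rather than a set of visited vertices, which requires using resets --- your premise that a one-clock TNG cannot reset its clock is where you went wrong. The paper resets $x$ on every edge, so a guard $x=a_i$ versus $x=0$ forces Player~$k$ to spend exactly $a_i$ or $0$ time units in $v_i$; she then reaches the single costly vertex $v_{n+1}$ at global time $\sum_{i\in S} a_i$ and must stay there exactly one unit. A helper with a unique strategy occupies $v_{n+1}$ precisely during $[\mu,\mu+1]$, so in the cost-sharing case Player~$k$ pays $1/2$ instead of $1$ iff $\sum_{i \in S} a_i = \mu$ (for congestion, two helpers cover $[0,\mu]$ and $[\mu+1,\sum_i a_i]$ so that $[\mu,\mu+1]$ is the only slot where she is alone). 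This load-coincidence test is a genuinely non-separable, indicator-like dependence on the subset sum, and it is exactly what your fixed-window construction rules out.
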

\begin{proof}
We reduce the BR problem to and from the COR problem, which is PSPACE-complete for PTAs with at least two clocks \cite{BBBR07}. A PTA can be seen as a one-player TNG, thus the BR problem for TNGs with two or more clocks is PSPACE-hard. For the upper bound, given a TNG $\T$, strategies $Q = \zug{\pi_1,\ldots, \pi_{k-1}}$ for Players~$1,\ldots,k-1$, and a threshold $\mu$, we construct a PTA $\P$ as in the proof of Theorem~\ref{thm:BR-TNG-to-PTA}.
Note that the size of $\P$ is polynomial in the size of the input and that $\P$ has one more clock than $\T$. An optimal path in $\P$ is a best response for Player~$k$, and such a path can be found in PSPACE.

The final case to consider is TNGs with one clock. We show that the BR problem is NP-hard for such instances using a reduction from the {\em subset-sum} problem. The input to that problem is a set of natural numbers $A = \{a_1, \dots ,a_n\}$ and $\mu \in \Nat$, and the goal is to decide whether there is a subset of $A$ whose sum is $\mu$. 
We start with the cost-sharing case. The game we construct is a two-player game on a network that is depicted in Figure~\ref{fig-TNG1clk}. \PT has a unique strategy that visits vertex $v_{n+1}$ in the time interval $[\mu,\mu+1]$. A \PO strategy $\pi$ corresponds to a choice of a subset of $A$. \PO's source is $v_1$ and 
her
target is $u_2$. The vertex $v_{n+1}$ is the only vertex that has a cost, which is $1$, and the other vertices cost $0$. For $1 \leq i \leq n$, \PO needs to choose between staying in vertex $v_i$ for a duration of $a_i$ time units, and exiting the vertex 
through
the top edge, or staying $0$ time units, and exiting the vertex 
through
the bottom edge. Finally, 
she
must stay in $v_{n+1}$ for exactly one time unit. The cost \PO pays for $v_{n+1}$ depends on the load. If 
she
stays there in the global time interval $[\mu,\mu+1]$, 
she
pays $1/2$, and otherwise 
she
pays $1$. Thus, \PO has a strategy with which 
she
pays $1/2$ iff there is a subset of $A$ whose sum is $\mu$, and we are done.
\begin{figure}[ht]
\centering
\includegraphics[height=2.7cm]{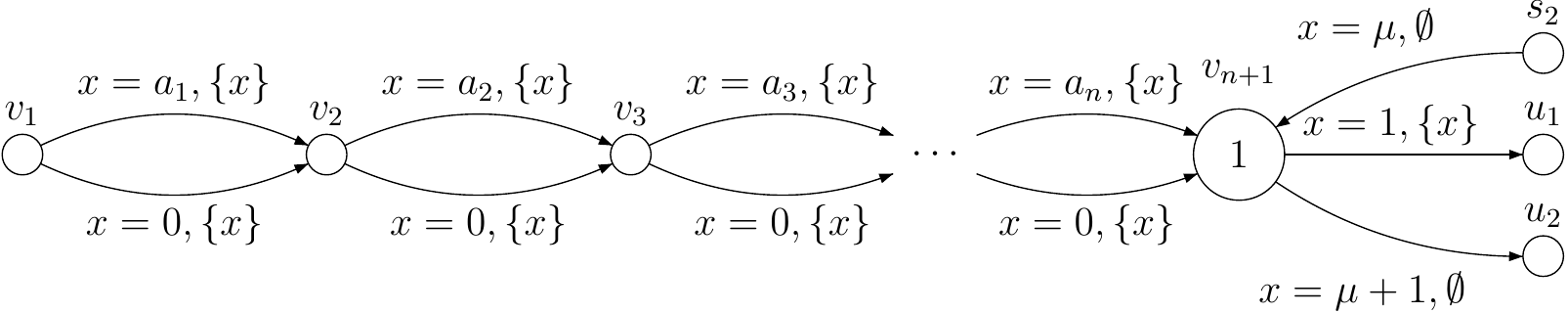}
\caption{\label{fig-TNG1clk} NP-hardness proof of best response problem in one clock TNG}
\end{figure}

The reduction for congestion games is similar. Recall that in congestion games, the cost increases with the load, thus a player would aim at using a vertex together with as few other players as possible. The network is the same as the one used above. Instead of two players, we use three players, where Players~$2$ and~$3$ have a unique strategy each. \PT must stay in $v_{n+1}$ in the time interval $[0,\mu]$ and Player~$3$ must stay there 
during the interval
$[\mu+1,\sum_{1 \leq i \leq n}a_i]$. As in the above, \PO has a strategy in which 
she
uses $v_{n+1}$ alone in the time interval $[\mu,\mu+1]$ iff there is a subset of $A$ whose sum is $\mu$.
\end{proof}

\subsection{The social-optimum problem}
\label{soptp}
\begin{theorem}
\label{thm:SO-TNG-to-PTA}
Consider a TNG $\T =  \zug{k, C, V, E, \{\ell_v\}_{v \in V}, \zug{s_i, u_i}_{i \in [k]}}$. There is a PTA $\P$ with $k \cdot |C|$ clocks, $|V|^k$ vertices, and two vertices $\bar{s}$ and $\bar{u}$ such that there is a one-to-one cost-preserving correspondence between profiles in $\T$ and paths from $\bar{s}$ to $\bar{u}$; namely, for a profile $P$ and its  corresponding path $\eta_P$, we have $cost(P) = price(\eta_P)$. 
\end{theorem}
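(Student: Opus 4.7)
The plan is to build $\P$ as the $k$-fold \emph{product} of the timed network underlying $\T$. Concretely, I take the vertices of $\P$ to be tuples $\bar{v}=(v_1,\dots,v_k)\in V^k$ recording the current position of every player, with distinguished source $\bar{s}=(s_1,\dots,s_k)$ and target $\bar{u}=(u_1,\dots,u_k)$. Each player $i$ gets a private copy $C_i$ of the clocks $C$, for $k\cdot|C|$ clocks total. For every player $i$, every edge $e=(v,g,R,v')\in E$ of $\T$, and every tuple $\bar{v}\in V^k$ with $v_i=v$, I put an edge in $\P$ from $\bar{v}$ to the tuple obtained by replacing $v_i$ with $v'$, guarded by $g$ rewritten over $C_i$ and resetting the clocks of $R$ inside $C_i$. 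An edge of $\P$ therefore records exactly a choice of a moving player together with a legal move of that player in $\T$, and player~$i$'s private clocks evolve precisely as they do along her strategy.

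I next assign to each product vertex $\bar{v}$ the rate
\[
r_{\bar{v}} \;=\; \sum_{v\in V} \ell_v(\load_{\bar{v}}(v))\cdot \load_{\bar{v}}(v),
\]
where $\load_{\bar{v}}(v)=|\{i:v_i=v\}|$. Staying in $\bar{v}$ for a duration $d$ therefore costs exactly $d\cdot r_{\bar{v}}$, which equals the total amount paid by \emph{all} players over the matching period in $\T$.

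For the correspondence, given a profile $P=(\pi_1,\dots,\pi_k)$, I interleave the $k$ timed paths by ordering all edge-traversal events by global timestamp, breaking ties by player index, and let $\eta_P$ be the resulting timed path in $\P$. The product construction ensures that $\eta_P$ is legal in $\P$ iff every $\pi_i$ is legal in $\T$. Conversely, a timed path from $\bar{s}$ to $\bar{u}$ projects, in each coordinate $i$, onto a unique legal strategy $\pi_i$ by keeping only the edges that move coordinate $i$, together with their global times. The two constructions are mutual inverses. Cost preservation is immediate: both $cost(P)$ and $price(\eta_P)$ decompose as a sum, over the periods in $\Upsilon_P$, of (duration)$\,\times\,$(load-weighted latency at the current state), and the product-vertex rate is precisely that load-weighted latency.

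The main obstacle is that players in $\T$ may reach their targets at different times, after which a finished player contributes neither cost nor load, while in the naive product she remains parked on $u_i$. I circumvent this by a mild preprocessing step: add a single universal ``done'' vertex $w$ with $\ell_w\equiv 0$ and, for each $i$, a fresh edge from $u_i$ to $w$ with a trivial guard and empty reset; the target for each player becomes $w$, so $\bar{u}=(w,\dots,w)$. Since edges are traversed instantaneously this preserves every profile's cost in $\T$, and in the product PTA a ``finished'' player now sits at $w$, whose zero latency makes the rate match the cost paid by the still-active players. The number of product vertices is then $(|V|+1)^k=O(|V|^k)$, matching the claimed bound.
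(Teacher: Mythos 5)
Your construction is essentially the paper's: both build the $k$-fold synchronized product of the timed network, give each player a private copy of the clocks (for $k\cdot|C|$ clocks total), read off the social cost as a rate on product vertices, and establish the correspondence by interleaving/projection. Two details where you are in fact more careful than the paper's write-up: your load-weighted rate $\sum_{v} \load_{\bar{v}}(v)\cdot\ell_v(\load_{\bar{v}}(v))$ is the correct ``sum of the individual costs'' (the paper's displayed rate omits the factor $\load_{\bar{v}}(v)$), and your zero-cost sink $w$ explicitly accounts for players who reach their targets early and should stop contributing load and cost, an issue the paper leaves implicit; conversely the paper must preprocess self-loops away because its product edges move several players at once, whereas your one-mover-per-edge encoding does not need this.
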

\begin{proof}
First, we show how to construct, given a TNG $\T$ with self loops, a TNG $\T'$ that has no self loops. 
Consider a vertex $v$ that has a self loop $e = \zug{v, g, R, v}$
in $\T$.
In $\T'$, we remove $e$, we add a vertex $v'$,
a new clock $x$
and ``redirect'' $e$ to $v'$ while resetting $x$. Formally, we have the edge $\zug{v, g, R\cup \set{x},v'}$. We enforce that $v'$ is left instantaneously using an edge $\zug{v', \set{x=0}, \emptyset, v}$. Clearly, the strategies of the players in $\T$ and $\T'$ coincide.

Recall that the social optimum is obtained when the players do not act selfishly, rather they cooperate to find the profile that minimizes their sum of costs. Let $\T = \zug{k, C, V, E, \{\ell_v\}_{v \in V},\zug{s_i, u_i}_{i \in [k]}}$. We construct a PTA $\P$ by taking $k$ copies of $\T$. For $i\in [k]$, the $i$-th copy is used to keep track of the timed path that Player~$i$ uses. We need $k$ copies of the clocks of 
$\T$
to guarantee that the individual paths are legal. 
Recall that the players' goal is to minimize their total cost, thus for each point in time, the price they pay in $\P$ is the sum of their individual costs in $\T$. More formally, consider a vertex $\bar{v} = \zug{v_1,\ldots, v_k}$ in $\P$ and let $S_{\bar{v}} \subseteq V$ be the set of vertices that appear in $\bar{v}$. Then, the load on a vertex $v \in S_{\bar{v}}$ in $\bar{v}$ is $load_{\bar{v}}(v) = |\set{i: v_i=v}|$, and the rate of $\bar{v}$ is $\sum_{v \in S_{\bar{v}}} \ell_v(load_{\bar{v}}(v))$. 
We show below that the cost of the social optimum in $\T$ coincides with the price of the optimal timed path in $\P$ from $\zug{s_1,\ldots,s_k}$ to the vertex $\zug{u_1,\ldots,u_k}$, i.e., the vertices that respectively correspond to the sources and targets of all players.
Towards this,
we show that for every path $\eta$ in the PTA $\P$, there exists a profile $P_\eta$ in $\T$ such that $price(\eta)$ in $\P$ equals $cost(P_\eta)$ in $\T$.
Consider a timed path $\eta = (\bar{v}_1, {t}_1), \dots, (\bar{v}_n, {t}_n), \bar{u}$ in $\P$ where $\bar{s} = \bar{v_1}$ and $\bar{v}_j = \zug{v_j^1, \dots, v_j^k}$, for $1 \leq j \leq n$.
For each Player~$i \in [k]$, we construct a timed path $\pi_i$ in $\T$ as follows.
Intuitively, we restrict $\eta$ to the $i$-th component and remove recurring vertices. 
Consider an index $1 \leq j \leq n$ in which Player~$i$ changes vertices in $\eta$, thus $v^i_j \neq v^i_{j+1}$.  
We call such an index $j$ a {\em changing index}. 
Let $j_1,\ldots, j_m$ be the changing indices for Player~$i$. 
Let $j_0=1$ and $j_{m+1}=n$.
Note that between changing indices Player~$i$ does not change vertices. 
We define 
$\pi_i 
= \zug{v^i_{j_0}, t^i_0},\zug{v^i_{j_1}, t^i_1},\ldots, \zug{v^i_{j_m}, t^i_m}$, 
where for $p \in \{0\} \cup [m]$, we have $t^i_p = \sum_{j_{p} \leq l < j_{p+1}} t_l$. That is, $t^i_p$ is the total time that Player~$i$ spends in $v^i_{j_p}$ in $\eta$. 

We claim that $\pi_i$ is a legal path,
i.e., 
we claim that the respective guards are satisfied
when the path switches between vertices by crossing the edges.
Consider an index $1 \leq l \leq m$. We can prove by induction on the length of $\pi_i$ that 
for every clock $x \in C$, the value of $x$
after the prefix $\zug{v^i_{j_1}, t^i_1},\ldots, \zug{v^i_{j_l}, t^i_l}$ of $\pi_i$ equals the value of the clock $\zug{x,i}$ after the prefix $\zug{\bar{v}_1,t_1},\ldots,\zug{\bar{v}_{j_l}, t_{j_l}}$ of $\eta$. 
We note that from the way we construct the set $R$ of clocks that are reset along an edge of the PTA $\P$,
clock $\zug{x,i}$ can only be reset 
exactly at the times at which $x$ is reset in $\pi_i$. Finally, let $P_\eta = \zug{\pi_1,\ldots, \pi_k}$. It is not hard to see that $cost(P_\eta) = price(\eta)$. 

Showing correctness of the other direction of the reduction is dual; namely, given a profile $P = \zug{\pi_1,\ldots, \pi_k}$ in 
$\T$
we construct a path $\eta$ from $\bar{s}$ to $\bar{u}$ in 
$\P$
such that 
$cost(P) = price(\eta)$. 
\stam{
Let $\T = \zug{k, C, V, E, \{\ell_v\}_{v \in V},$ $\zug{s_i, u_i}_{i \in [k]}}$.
We construct below a PTA $\P = \zug{V^k, E', C \times [k], \{r'_{\bar{v}}\}_{\bar{v} \in V^k}}$ with an initial vertex $\bar{s} = \zug{s_1, \dots, s_k}$ and a final vertex 
$\bar{u} = 
\zug{u_1, \dots, u_k}$ such that there is a one-to-one correspondence between profiles in $\T$ and paths in $\P$ that relates cost to price. That is, for a profile $P$, let $\eta$ be  the corresponding path in $\P$.
Then $cost(P) = price(\eta)$, and similarly in the other direction. 
Thus a cost optimal path from $\bar{s}$ to $\bar{u}$ in $\P$ corresponds to an SO in $\T$.

We describe the components of $\P$.
Consider two states $\bar{v} = \zug{v_1, \dots, v_k}$ and $\bar{v}' = \zug{v_1', \dots, v_k'}$.
There is an edge $e$ from $\bar{v}$ to $\bar{v}'$ iff 
for every $i \in [k]$, either $v_i = v'_i$ or, if $v_i \neq v'_i$, then there is an edge $e'_i = \zug{v_i, g'_i, R'_i, v'_i}$ in $\T$,
and there exists an $i \in [k]$ such that $v_i \neq v'_i$. 
Recall that $\P$ has $k$ copies of the clocks in $\T$, where, intuitively, the $i$-th copy of the clocks is used to verify that the restriction of a path in $\P$ to its $i$-th component, is a legal timed path of Player $i$ in $\T$.
We define the guard and resets of $e$. 
The guard of $e$ is of the form 
$g_1 \wedge \ldots \wedge g_k$, where for every $i \in [k]$, if $v_i = v'_i$, then $g_i = True$, and otherwise, $g_i$ is obtained from the guard $g'_i$ on the edge $e'_i$ by replacing clock $x$ with $\zug{x,i}$, for every clock $x$ appearing in 
$g'_i$.
Similarly, the set of clocks that are reset when crossing $e$ is $R = R_1 \cup \ldots \cup R_k$, where $R_i \subseteq (C \times \set{i})$, and we define $R_i = \emptyset$ if $v_i = v'_i$, and otherwise $\zug{x, i} \in R_i$ iff 
$x \in R_i'$.
Finally, we define the rate of $\bar{v}$. Let $S \subseteq V$ be the set of vertices that appear in 
$\bar{v}$.
For $v \in S$, let $load_{\bar{v}}(v)$ be the number of times $v$ appears in $\bar{v}$. 
Then, the rate of 
$\bar{v}$ 
is 
$\sum_{v \in S} \ell_v(load_{\bar{v}}(v))$.
In the full version,
we show the correspondence between the paths from $\bar{s}$ to $\bar{u}$ in 
$\P$
and the profiles in
$\T$.
}
\end{proof}


We turn to study the complexity of the SOPT problem. In the decision-problem variant, we are given a TNG $\T$ and a value $\mu$ and the goal is to decide whether there is a profile $P$ in $\T$ with $cost(P) \leq \mu$. 
Theorem~\ref{thm:SO-TNG-to-PTA} implies a reduction from the SOPT problem to the COR problem, and, as in the BR problem, the other direction is trivial. For one-clock instances, we use the same NP-hardness proof as in the BR problem. 

\begin{theorem}
\label{thm:SOPT-complexity}
The SOPT problem is PSPACE-complete for at least two clocks and it is NP-hard for TNGs with one clock.
\end{theorem}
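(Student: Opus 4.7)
The plan is to handle the two lower bounds and the upper bound separately, leveraging Theorem~\ref{thm:SO-TNG-to-PTA} together with the PSPACE-completeness of COR (Theorem~\ref{thm:PTA}) and recycling the subset-sum construction from the proof of Theorem~\ref{thm:BR-complexity}.

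For the PSPACE upper bound with two or more clocks, I apply Theorem~\ref{thm:SO-TNG-to-PTA} to reduce SOPT on $\T$ to COR on the PTA $\P$. The difficulty here is that $\P$ has $|V|^k$ vertices, so it cannot be built explicitly in polynomial space. The key observation is that $\P$ admits a succinct representation: each vertex $\bar v$ is a $k$-tuple over $V$, and hence storable in $O(k\log |V|)$ bits; the number of clocks $k\cdot |C|$ is polynomial; and the guard, reset, and rate of any transition of $\P$ can be computed in polynomial time directly from the description of $\T$. The PSPACE algorithm of \cite{BBBR07} for COR operates on a corner-point/region abstraction whose states are pairs $(\bar v, r)$ consisting of a vertex and a clock region, both of polynomial size in this succinct description. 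Hence one may non-deterministically guess and verify a cheapest path through the region graph on the fly, yielding an NPSPACE $=$ PSPACE procedure in the size of $\T$.

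For the matching lower bound, a PTA $\P$ with two clocks can be viewed as a single-player TNG $\T_\P$ in which the unique player's source and target are the endpoints of the COR instance, and each vertex $v$ carries the constant latency function $\ell_v(1)=r_v$. Then every profile consists of a single timed path $\pi$, and $cost(\zug{\pi})$ equals $price(\pi)$ in $\P$. Therefore deciding whether the SO is at most $\mu$ in $\T_\P$ coincides with the COR instance $(\P,\mu)$, and PSPACE-hardness of COR for PTAs with two or more clocks transfers to SOPT.

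For NP-hardness in the one-clock setting, I reuse the subset-sum reduction built in the proof of Theorem~\ref{thm:BR-complexity}. In the cost-sharing instance associated with $A=\{a_1,\dots,a_n\}$ and target $\mu$, \PT has a unique strategy that occupies $v_{n+1}$ during $[\mu,\mu+1]$, while \PO's strategies are in bijection with subsets $S\subseteq A$ and place her in $v_{n+1}$ during $[\mu,\mu+1]$ iff $\sum_{i\in S}a_i=\mu$. If the two players share $v_{n+1}$ they each pay $1/2$, so $cost(P)=1$; otherwise each pays $1$ and $cost(P)=2$. Hence $SO(\T)=1$ iff the subset-sum instance has a solution. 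The congestion variant is handled identically, since Players~$2$ and~$3$ have unique strategies and the sum of costs is minimized exactly when \PO avoids overlap at $v_{n+1}$, which again requires a matching subset. The main obstacle in the whole argument is the upper bound, where care is required to ensure that the PSPACE COR algorithm of \cite{BBBR07} is simulated over the succinct, rather than explicit, representation of the exponentially-large PTA produced by Theorem~\ref{thm:SO-TNG-to-PTA}.
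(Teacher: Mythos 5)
Your proposal is correct and follows essentially the same route as the paper: the PSPACE upper bound via an on-the-fly NPSPACE traversal of the (exponential, but succinctly representable) weighted region graph of the product PTA from Theorem~\ref{thm:SO-TNG-to-PTA}, and NP-hardness for one clock by reusing the subset-sum gadget of Theorem~\ref{thm:BR-complexity} with threshold $1$. The only divergence is the PSPACE-hardness source: you reduce directly from COR by viewing a two-clock PTA as a one-player TNG (which is sound, since Theorem~\ref{thm:PTA} guarantees the optimum is attained, so the infimum formulation of COR matches the existential formulation of SOPT, and the paper itself calls this direction trivial), whereas the paper's written proof instead reduces from reachability in equality-guarded timed automata \cite{AL02} via a costly bypass vertex; both arguments are valid.
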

\begin{proof}
In \cite{BBBR07}, the COR problem was shown to be in PSPACE as follows. Given a PTA $\P = \zug{C, V, E, \set{r_v}_{v \in V}}$, they construct a \emph{weighted discrete graph}, which is an extension of the region graph \cite{AD94}. 
Let $R_\P$ denote the region graph of $\P$. Then, the size of the weighted directed graph of $\P$ is at most  $(|C|+1) \cdot |R_\P|$. The size of the region graph is $\mathcal{O}(\chi^{|C|} \cdot |C|! \cdot |V|)$, where $\chi$ is the maximum constant appearing in the guards on the edges of $\P$.
Note that the size of this weighted discrete graph is exponential in the size of $\P$ and the length of a cost optimal path is bounded by the number of vertices in the weighted discrete graph.
Further the encoding of the cost of a cost optimal path can be done in PSPACE.
Hence one can decide in NPSPACE, thus in PSPACE, the COR problem.

Given a TNG $\T = \zug{k, C, V, E, \{\ell_v\}_{v \in V}, \zug{s_i, u_i}_{i \in [k]}}$, let $\P$ be the PTA constructed in the proof of Theorem \ref{thm:SO-TNG-to-PTA}.
Recall that $\P$ has $k|C|$ clocks and $|V|^k$ vertices and an SO in $\T$ corresponds to an optimal path $\eta$ in $\P$ and the cost of the SO profile equals the cost of the path $\eta$.
We note that the size of the weighted discrete graph for $\P$ is exponential in the size of the input $\T$.
Since the cost of an SO in $\T$ equals the cost of a cost optimal path in $\P$, the PSPACE-membership of the SOPT problem follows.

It remains to prove that the SOPT problem is PSPACE-hard.
In \cite{AL02}, the reachability problem for timed automata (which we call here timed networks) with only equality operators on the guards, has been shown to be PSPACE-complete.
Given a timed network $\mathcal{A} = \zug{C,V,E}$, an initial and final vertex $s$ and $u$ in $V$ respectively, we construct a one player TNG 
$\T = \zug{1, C, V', E', \{\ell_v\}_{v \in V}, \zug{s_1, u_1}}$, where $V'= V \cup \{s', u', v\}$ such that $s', u'$ and $v'$ are not in $V$.
The set $E'= E \cup \{(s',s),(u,u'),(s',v),(v,u')\}$.
All the vertices apart from $v$ are free vertices while $\ell_v(1) = 1$
such that the only player has a strategy that costs 
less than or equal to $0$
iff there is a path from $s$ to $u$ in $\A$. Note that in a one-player game, the social optimum is the strategy 
(which is a timed path)
that minimizes the cost of the player. 
The vertices $s'$ and $u'$ are the fresh initial and target vertices respectively
and connect 
$s'$ to $s$ and $u'$ to $u$.
Thus, if there is a path from $s$ to $u$
in $\mathcal{A}$,
then \PO has a strategy 
with cost $0$.
Otherwise, if there is no path from $s$ to $u$ in $\mathcal{A}$, then the cost of the strategy that uses $v$ is $1$.

For the lower bound of one-clock instances, we revisit the NP-hardness proof of the BR-problem in Theorem \ref{thm:BR-complexity}.
For a cost-sharing TNG, we note that an SO with cost no more than $1$ exists iff there is a solution to the subset-sum problem.
The proof for congestion TNG is analogous.
\end{proof}

\section{Existence of a Nash Equilibrium} 
\label{sec-existsNE}
The first question that arises in the context of games is the existence of an NE. 
In \cite{AGK17}, we showed that GTNGs are guaranteed to have an NE by reducing every GTNG to an NG.
We strengthen the result by showing that every TNG has an NE.

In order to prove existence, we combine techniques from NGs and 
use the reduction to PTA 
in Theorem \ref{thm:BR-TNG-to-PTA}.
A standard method for finding an NE is  
showing that a {\em best-response sequence} converges:
Starting from some profile $P = \zug{\pi_1, \ldots, \pi_k}$, one searches for a player that can benefit from a unilateral deviation. If no such player exists, then $P$ is an NE and we are done. Otherwise, let $\pi'_i$ be a beneficial deviation for Player~$i$, i.e., $cost_i(P) > cost_i(P[i \gets \pi'_i])$. The profile $P[i \gets \pi'_i]$ is considered next and the above procedure repeats.

A {\em potential function} for a game is a function $\Psi$ that maps profiles to costs, such that the following holds: for every profile $P = \zug{\pi_1, \ldots, \pi_k}$, $i \in [k]$, and strategy $\pi'_i$ for Player~$i$, we have $\Psi(P) - \Psi(P[i \gets \pi'_i]) = cost_i(P) - cost_i(P[i \gets \pi'_i])$,
i.e., the change in potential equals the change in cost of the deviating player.
A game is a {\em potential game} if it has a potential function. 
In a potential game with finitely many profiles, since the potential of every profile is non-negative and in every step of a best-response sequence the potential strictly decreases,
every best-response sequence terminates in an NE.
It is well-known that RAGs are potential games \cite{Ros73} and since they are finite, this implies that an NE always exists.

The idea of our proof is as follows. First, we show that TNGs are potential games, which does not imply existence of NE since TNGs have infinitely many profiles. Then, we focus on a specific best-response sequence   that starts from an integral profile and allows the players to deviate only to integral strategies.
Finally, we define {\em normalized TNGs} and show how to {\em normalize} a TNG in a way that preserves existence of NE. 
For normalized TNGs, we show that the potential reduces at least by $1$ along each step in the 
best-response sequence, thus it converges to an NE.

\begin{theorem}
\label{thm:potential-games}
TNGs are potential games.
\end{theorem}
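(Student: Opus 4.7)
The plan is to extend Rosenthal's classical potential to the timed setting. For a profile $P$ with induced partition $\Upsilon_P$, define
\[
\Psi(P) \;=\; \sum_{v \in V} \sum_{\gamma \in \Upsilon_P} |\gamma| \cdot \sum_{j=1}^{\load_P(v,\gamma)} \ell_v(j),
\]
where $\load_P(v,\gamma)$ counts, as in Section~\ref{sec:prelims}, the number of players $i$ with $\visits_P(i,\gamma)=v$ (so players who have already reached their target during $\gamma$ contribute nothing). The goal is to show that for every profile $P=\zug{\pi_1,\dots,\pi_k}$, every $i\in[k]$, and every strategy $\pi_i'$ of Player~$i$ with $P'=P[i\gets\pi_i']$, one has $\Psi(P)-\Psi(P')=cost_i(P)-cost_i(P')$.

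The first step is a refinement-invariance lemma: if $\Upsilon'$ is any finer partition of $[0,t_{\max}]$ than $\Upsilon_P$ (every element of $\Upsilon'$ lies inside some element of $\Upsilon_P$), then recomputing $\Psi(P)$ using $\Upsilon'$ gives the same value. This is immediate because on each $\gamma\in\Upsilon_P$ the load profile is constant, and the contribution is linear in $|\gamma|$, so splitting $\gamma$ into sub-intervals $\gamma_1,\dots,\gamma_r$ with $\sum|\gamma_j|=|\gamma|$ preserves the inner sum. The same holds for $cost_i(\cdot)$.

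Second, fix a deviation $\pi_i\to\pi_i'$, and let $\Upsilon^\star$ be the common refinement of $\Upsilon_P$ and $\Upsilon_{P'}$ (the partition generated by the union of edge-crossing times of all players in both profiles, together with $0$ and $\max(t_{\max}(P),t_{\max}(P'))$). By refinement invariance, both $\Psi(P)$ and $\Psi(P')$, as well as $cost_i(P)$ and $cost_i(P')$, can be computed using $\Upsilon^\star$. Now for each $\gamma\in\Upsilon^\star$, only Player~$i$ has possibly changed her location between $P$ and $P'$; let $v=\visits_P(i,\gamma)$ and $v'=\visits_{P'}(i,\gamma)$, with the convention that these are undefined when Player~$i$ has already reached $u_i$ (in which case she contributes $0$ to all loads). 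If $v=v'$, every load on every vertex is identical under $P$ and $P'$ during $\gamma$, so this period contributes equally to $\Psi(P)$ and $\Psi(P')$, and equally to Player~$i$'s cost. If $v\neq v'$, then $\load_{P'}(v,\gamma)=\load_P(v,\gamma)-1$ and $\load_{P'}(v',\gamma)=\load_P(v',\gamma)+1$, while the load on every other vertex is unchanged. Using the telescoping identity $\sum_{j=1}^{\ell+1}\ell_w(j)-\sum_{j=1}^{\ell}\ell_w(j)=\ell_w(\ell+1)$, the per-period change in potential is
\[
|\gamma|\bigl(\ell_v(\load_P(v,\gamma))-\ell_{v'}(\load_P(v',\gamma)+1)\bigr)\;=\;|\gamma|\bigl(\ell_v(\load_P(v,\gamma))-\ell_{v'}(\load_{P'}(v',\gamma))\bigr),
\]
which is exactly $cost_{\gamma,i}(P)-cost_{\gamma,i}(P')$ (with the obvious convention that the term vanishes when the corresponding vertex is undefined because Player~$i$ has already finished). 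Summing over $\gamma\in\Upsilon^\star$ yields the required identity.

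The main obstacle I anticipate is bookkeeping around the two endpoints of Player~$i$'s journey, since $\pi_i$ and $\pi_i'$ may end at different global times $\tau_i$ and $\tau_i'$, so during some $\gamma\in\Upsilon^\star$ she is in transit under one profile but already at her target under the other; the convention that $\visits_P(i,\gamma)$ is undefined (contributing $0$ to every load and to $cost_{\gamma,i}(P)$) when $\gamma$ starts after $\tau_i$, together with the use of the common refinement $\Upsilon^\star$ rather than either of $\Upsilon_P$, $\Upsilon_{P'}$ in isolation, handles this uniformly and keeps the per-period identity valid.
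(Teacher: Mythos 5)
Your proposal is correct and follows essentially the same route as the paper: the same Rosenthal-style potential $\Psi(P)=\sum_{\gamma\in\Upsilon_P}\sum_{v\in V}\sum_{j=1}^{\load_P(v,\gamma)}|\gamma|\cdot\ell_v(j)$, the same common refinement of $\Upsilon_P$ and $\Upsilon_{P'}$ justified by linearity in $|\gamma|$, and the same per-period telescoping argument showing $\Psi_\gamma(P)-\Psi_\gamma(P')=cost_{\gamma,i}(P)-cost_{\gamma,i}(P')$. Your explicit bookkeeping for periods after a player has reached her target is a point the paper passes over more quickly, but it is a refinement of the same argument, not a different one.
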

\begin{proof}
Consider a TNG $\T = \zug{k, C, V, E, \{\ell_v\}_{v \in V}, \zug{s_i, u_i}_{i \in [k]}}$. Recall that for a profile $P$, the set of intervals that are used in $P$ is $\Upsilon_P$. We define a potential function $\Psi$ that is an adaptation of Rosenthal's potential function \cite{Ros73} to TNGs. We decompose the definition of $\Psi$ into smaller components, which will be helpful later on. For 
every
$\gamma \in \Upsilon_P$ and $v \in V$, we define 
$\Psi_{\gamma,v}(P) = \sum_{j=1}^{load_P(v,\gamma)} |\gamma| \cdot \ell_v(j)$, that is, we take the sum of $|\gamma| \cdot \ell_v(j)$ for all $j \in [load_P(v, \gamma)]$.
We 
define $\Psi_\gamma(P) = \sum_{v \in V} \Psi_{\gamma, v}(P)$, and we define $\Psi(P) = \sum_{\gamma \in \Upsilon_P} \Psi_\gamma(P)$. 
Let 
for some $i \in [k]$,
we have
$P'$ 
to
be a profile that is obtained by an unilateral deviation of Player~$i$
to a strictly beneficial strategy $\pi_i'$ from her current strategy in $P$, that is $P'= P[i \leftarrow \pi']$
for some $i \in [k]$. 
We show that $\Psi(P) - \Psi(P') = cost_i(P) - cost_i(P')$. 
Let $T$ and $T'$ be the minimal sets such that $P$ and $P'$ are $T$- and $T'$-profiles, respectively. Let $\Upsilon_{P,P'}$ be a set of intervals that refine the intervals in $\Upsilon_P$ and $\Upsilon_{P'}$ according to $T \cup T'$. Formally, consider an interval $[a,b] \in \Upsilon_P$. The definition of $\Upsilon_P$ implies that there is no time point $t \in T$ with $a < t < b$. On the other hand, if there exist time points $t_1,\ldots,t_n \in T'$ with $a < t_1 < \ldots<t_n < b$, then $[a,t_1],[t_1,t_2],\ldots,[t_n,b] \in \Upsilon_{P,P'}$, and dually for intervals in $\Upsilon_{P'}$. 

It is not hard to see that since $\Upsilon_{P,P'}$ refines $\Upsilon_P$ and $\Upsilon_{P'}$, we have $\sum_{\gamma \in \Upsilon_{P,P'}} \Psi_\gamma(P) = \Psi(P)$ and 
$\sum_{\gamma \in \Upsilon_{P,P'}} \Psi_\gamma(P') = \Psi(P')$. 
Consider an interval $\gamma \in \Upsilon_{P,P'}$. Recall that $P'$ is obtained by letting Player~$i$ change her strategy from the one in $P$ and the other players' strategies remain the same. Let $v = \visits_P(i, \gamma)$ and $v' = \visits_{P'}(i, \gamma)$. 
For every $v'' \neq v, v'$, since the other players do not change their strategies, the loads stay the same over the duration $\gamma$ and we have $\Psi_{\gamma,v''}(P) = \Psi_{\gamma,v''}(P')$. 
We consider the case where $v \neq v'$. Thus, Player~$i$ uses $v$ in the interval $\gamma$ in $P$ and uses $v'$ in the same interval in $P'$, or the other way around. Thus, we have 
$|load_{P}(v, \gamma) - load_{P'}(v, \gamma)| = 1$ and 
$|load_{P}(v', \gamma) - load_{P'}(v', \gamma)| = 1$. 
Suppose $load_P(v, \gamma) = load_{P'}(v,\gamma) + 1$ and 
$load_{P}(v', \gamma) = load_{P'}(v', \gamma) - 1$. 
Thus $\Psi_\gamma(P) - \Psi_\gamma(P') = (\Psi_{\gamma,v}(P) + \Psi_{\gamma,v'}(P)) - (\Psi_{\gamma,v}(P') + \Psi_{\gamma,v'}(P')) = (\Psi_{\gamma,v}(P) - \Psi_{\gamma,v}(P')) + (\Psi_{\gamma,v'}(P) - \Psi_{\gamma,v'}(P'))$ $= |\gamma| \cdot \ell_v(load_{P}(v, \gamma)) - |\gamma| \cdot \ell_{v'}(load_{P'}(v', \gamma))$.
Now the costs of Player $i$ in profile $P$ and $P'$ over the duration $\gamma$ are $|\gamma| \cdot \ell_v(load_{P}(v, \gamma))$ and $|\gamma| \cdot \ell_{v'}(load_{P'}(v', \gamma))$ respectively.
Thus $\Psi_\gamma(P) - \Psi_\gamma(P') = cost_{\gamma,i}(P) - cost_{\gamma,i}(P')$.
Since we sum up for all $\gamma \in \Upsilon_{P,P'}$, we get $\Psi(P) - \Psi(P') = cost_i(P) - cost_i(P')$, and we are done.
\end{proof}


Recall from Theorem \ref{thm:BR-TNG-to-PTA}, that given a TNG, a profile $P$ and an index $i$, we find the best response of Player $i$ by constructing a PTA.
If $P$ is an integral profile, from Theorem \ref{thm:PTA}, we have that the best response of Player $i$ also leads to an integer profile.
Thus we have the following lemma.
\begin{lemma} 
\label{lem:integral-BR}
Consider a TNG $\T$ and an integral profile $P$. For $i \in [k]$, if Player~$i$ has a beneficial deviation from $P$, then she has an integral beneficial deviation.
\end{lemma}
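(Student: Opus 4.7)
The plan is to derive this lemma as a direct corollary of Theorem~\ref{thm:BR-TNG-to-PTA} combined with Theorem~\ref{thm:PTA}. Fix Player~$i$ and consider the other players' strategies $\pi_1,\ldots,\pi_{i-1},\pi_{i+1},\ldots,\pi_k$ taken from the integral profile $P$. Since $P$ is integral, each of these $k-1$ strategies is an integral strategy, so the hypothesis of Theorem~\ref{thm:BR-TNG-to-PTA} is satisfied (up to the cosmetic relabelling of Player~$i$ as the ``reacting'' player, which is explicitly noted in the definition of the BR problem to be without loss of generality).

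Next I would invoke Theorem~\ref{thm:BR-TNG-to-PTA} to obtain a PTA $\P$ with two designated vertices $v,u$ such that strategies of Player~$i$ in $\T$ correspond bijectively and cost-preservingly to timed paths from $v$ to $u$ in $\P$. In particular, if Player~$i$ has \emph{some} beneficial deviation $\pi_i'$, then the set of $v$-to-$u$ timed paths in $\P$ with price strictly less than $cost_i(P)$ is nonempty, and hence the infimum $opt(v,u)$ over all such paths satisfies $opt(v,u) \le price(\eta_{\pi_i'}) < cost_i(P)$.

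Now I apply Theorem~\ref{thm:PTA}: the optimal price in a PTA is attained by an \emph{integral} timed path $\eta^*$ from $v$ to $u$, i.e., $price(\eta^*) = opt(v,u)$. Under the correspondence of Theorem~\ref{thm:BR-TNG-to-PTA}, $\eta^*$ pulls back to a strategy $\pi_i^*$ for Player~$i$ in $\T$ whose traversal times are all integers (since the correspondence preserves the time stamps at which edges are crossed), and with $cost_i(P[i \leftarrow \pi_i^*]) = price(\eta^*) \le price(\eta_{\pi_i'}) < cost_i(P)$. Hence $\pi_i^*$ is an integral beneficial deviation from $P$, as required.

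The only thing to verify carefully is that the correspondence in Theorem~\ref{thm:BR-TNG-to-PTA} really does map integral paths of $\P$ to integral strategies of $\T$; this is essentially built into the construction, since the PTA clocks include the original local clocks of $\T$ together with one extra global clock, and traversal times in $\P$ coincide with the traversal times in the reconstructed strategy $\pi_i^*$. I do not expect any substantial obstacle here: the hard work has already been done in Theorems~\ref{thm:PTA} and~\ref{thm:BR-TNG-to-PTA}, and the present lemma is essentially a repackaging of their integrality conclusions in the language of best-response sequences.
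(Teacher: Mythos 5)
Your proposal is correct and matches the paper's own argument: the authors likewise derive the lemma by applying the reduction of Theorem~\ref{thm:BR-TNG-to-PTA} to the integral strategies of the other players and then invoking the integrality of cost-optimal paths in PTAs from Theorem~\ref{thm:PTA}, noting (at the end of the proof of Theorem~\ref{thm:BR-TNG-to-PTA}) that an integral path in $\P$ pulls back to an integral strategy in $\T$. No gap; your extra care about the guards of the auxiliary global clock being integer-valued (which holds precisely because $P$ is integral) is the only point the paper leaves implicit.
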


The last ingredient of the proof 
gives a lower bound for
the difference in cost that is achieved in a beneficial integral deviation
for some player $i \in [k]$,
which in turn bounds the change in potential.

We first need to introduce a \emph{normalized} form of TNGs. Recall that the latency function in a TNG $\T$ is of the form $\ell_v: [k] \rightarrow \Q_{\geq 0}$. In a normalized TNG all the latency functions map loads to natural numbers, thus for every vertex $v \in V$, we have $\ell_v: [k] \rightarrow \Nat$. Constructing a normalized TNG from a TNG is easy. Let $L$ be the least common multiple of the denominators of the elements in the set $\set{\ell_v(l): v \in V \mbox{ and } l \in [k]}$. For every latency function $\ell_v$ 
and every $l \in [k]$ ,
we construct a new latency function $\ell'_v$ by $\ell'_v(l) = \ell_v(l) \cdot L$. 

Consider a TNG $\T$ and let $\T'$ be the normalized TNG that is constructed from $\T$. It is not hard to see that for every profile $P$ and $i \in [k]$, we have $cost_i(P)$ in $\T'$ is $L \cdot cost_i(P)$ in $\T$. We can thus restrict attention to normalized TNGs as the existence of NE and convergence of best-response sequence in $\T'$ implies the same properties in $\T$. In order to show that a best-response sequence converges in TNGs, we bound the change of potential in each best-response step by observing that in normalized TNGs, the cost a player pays is an integer.

\begin{lemma} 
\label{lem-costBRCSCON}
Let $\T$ be a normalized TNG, $P = \zug{\pi_1, \ldots, \pi_k}$ be an integral profile in $\T$, and $\pi'_i$ be a beneficial integral deviation for Player~$i$, for some $i \in [k]$. Then,  $cost_i(P) - cost_i(P[i \gets \pi'_i]) \ge 1$.
\end{lemma}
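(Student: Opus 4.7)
The plan is to argue that in a normalized TNG, the individual cost of any player in any integral profile is a natural number, after which the claim follows simply from the fact that a strictly beneficial deviation must decrease an integer by a positive integer, hence by at least $1$.

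First I would observe that when $P = \zug{\pi_1, \ldots, \pi_k}$ is an integral profile, the minimal set $T$ of time points for which $P$ is a $T$-profile consists of natural numbers, since the times at which players cross edges along the $\pi_j$'s are all in $\Nat$. Consequently, every period $\gamma = [m_1,m_2] \in \Upsilon_P$ has $m_1, m_2 \in \Nat$, and therefore $|\gamma| = m_2 - m_1 \in \Nat$. Moreover, since $\T$ is normalized, for every $v \in V$ and every load $l \in [k]$ we have $\ell_v(l) \in \Nat$. Combining these two observations with the definition
\[
cost_i(P) \;=\; \sum_{\gamma \in \Upsilon^i_P} \ell_{\visits_P(i,\gamma)}\!\bigl(load_P(\visits_P(i,\gamma),\gamma)\bigr) \cdot |\gamma|,
\]
yields $cost_i(P) \in \Nat$ as a finite sum of products of natural numbers.

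Next I would apply the same argument to the deviated profile $P' = P[i \gets \pi'_i]$. Since $\pi'_i$ is integral by hypothesis and every other $\pi_j$ is integral (as $P$ is integral), the profile $P'$ is integral as well. Hence by the previous paragraph, $cost_i(P') \in \Nat$.

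Finally, because $\pi'_i$ is a \emph{beneficial} deviation for Player~$i$, we have $cost_i(P) > cost_i(P')$. Two natural numbers with strict inequality differ by at least $1$, so $cost_i(P) - cost_i(P') \ge 1$, as required. The only subtle point in the argument is verifying that integrality of the edge-crossing times in $P$ propagates to integrality of the period lengths, but this is immediate from how $\Upsilon_P$ is defined; no heavier machinery is needed.
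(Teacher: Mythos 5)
Your proof is correct and follows exactly the argument the paper intends: the text immediately preceding the lemma observes that in a normalized TNG the cost a player pays under an integral profile is an integer, which together with strict improvement gives the gap of at least $1$. Your write-up merely spells out the integrality of the period lengths and of the latency values, which is the same reasoning.
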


We can now prove the main result in this section.

\begin{theorem} \label{thm-all have bne}
Every TNG has an integral NE. Moreover, from an integral profile $P$, there is a best-response sequence  that converges to an integral NE.
\end{theorem}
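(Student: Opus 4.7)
The plan is to combine the three tools already established: the potential function from Theorem~\ref{thm:potential-games}, the integral best-response property (Lemma~\ref{lem:integral-BR}), and the unit-cost-gap bound (Lemma~\ref{lem-costBRCSCON}). First I would observe that it suffices to prove the statement for \emph{normalized} TNGs: given a TNG $\T$, multiplying every latency $\ell_v(l)$ by the common denominator $L$ yields a normalized TNG $\T'$ in which each player's cost in every profile equals $L$ times her cost in $\T$; hence a profile is an NE in $\T$ iff it is an NE in $\T'$, and the same profiles are integral. So I would assume henceforth that $\T$ is normalized, so that every cost $cost_i(P)$ on an integral profile $P$ is a nonnegative integer, and so is the potential $\Psi(P)$.

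Next, starting from the given integral profile $P_0 = P$, I would construct the best-response sequence $P_0, P_1, P_2, \ldots$ inductively as follows. If $P_n$ is a Nash equilibrium, the sequence terminates. Otherwise, some Player~$i$ has a beneficial deviation from her strategy in $P_n$; by Lemma~\ref{lem:integral-BR}, since $P_n$ is integral, Player~$i$ has an \emph{integral} beneficial deviation $\pi'_i$, and I set $P_{n+1} = P_n[i \leftarrow \pi'_i]$. By construction, every $P_n$ is integral and every step is a strictly beneficial unilateral deviation.

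The convergence argument is then immediate from the potential function. By Theorem~\ref{thm:potential-games}, for each step we have
\[
\Psi(P_n) - \Psi(P_{n+1}) = cost_i(P_n) - cost_i(P_{n+1}).
\]
By Lemma~\ref{lem-costBRCSCON}, because $\T$ is normalized and both $P_n$ and $P_{n+1}$ are integral, the right-hand side is at least $1$. Thus $\Psi$ strictly decreases by at least $1$ at each step, yet $\Psi(P_n) \ge 0$ for every profile. Consequently the sequence must terminate after at most $\Psi(P_0)$ steps, and the terminating profile is, by construction, an integral NE. This proves the ``moreover'' part, and the first sentence of the theorem follows once one exhibits any integral profile to start from: for each player $i$, one applies Theorem~\ref{thm:BR-TNG-to-PTA} (or the underlying fact from Theorem~\ref{thm:PTA} that cost-optimal paths in PTAs may be chosen integral) to the sub-game consisting only of Player~$i$ against an empty schedule of others, obtaining an integral strategy $\pi_i$; the tuple $\zug{\pi_1,\ldots,\pi_k}$ is then an integral profile on which one launches the above sequence.

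The main obstacle I expect is purely bookkeeping: ensuring that the construction of the initial integral profile is legitimate (each player's individual reachability from $s_i$ to $u_i$ under the clock guards admits an integral witness, which is standard for timed automata with non-strict integer guards) and that normalization really does preserve both the structure of beneficial deviations and the integrality of profiles. Once these are checked, the argument reduces to the familiar ``bounded nonnegative integer potential that strictly decreases'' pattern, adapted to the uncountable setting of TNGs by restricting moves to integral best responses.
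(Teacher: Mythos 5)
Your proposal is correct and follows essentially the same route as the paper: restrict to normalized TNGs, start from an integral profile, use Lemma~\ref{lem:integral-BR} to keep all deviations integral, and combine Theorem~\ref{thm:potential-games} with Lemma~\ref{lem-costBRCSCON} to show the potential drops by at least $1$ per step, so the sequence terminates within $\Psi(P_0)$ steps. Your explicit construction of the initial integral profile via integral witnesses for each player's reachability is a detail the paper leaves implicit, but it does not change the argument.
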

\begin{proof}
Lemma~\ref{lem:integral-BR} allows us to restrict attention to integral deviations. Indeed, consider an integral profile $P$. 
Lemma~\ref{lem:integral-BR}
implies that if no player has a beneficial integral deviation from $P$, then $P$ is an NE in $\T$. 
We start best-response sequence  from some integral profile $P_I$ and allow the players to deviate with integral strategies only. Consider a profile $P$ and let $P'$ be a profile that is obtained from $P$ by a deviation of Player~$i$. Recall 
from Theorem \ref{thm:potential-games}
that $cost_i(P) - cost_i(P') = \Psi(P) - \Psi(P')$. Lemma~\ref{lem-costBRCSCON} implies that when the deviation is beneficial, we have $\Psi(P) - \Psi(P') \geq 1$. Since the potential is 
non-negative,
the best-response sequence  above converges within $\Psi(P_I)$ steps.
\end{proof}
\vspace{-.2cm}
\begin{remark}
\label{rem:<}
A TNG that allows $<$ and $>$ operators on the guards is not guaranteed to have an NE. 
Indeed, in a PTA, which can be seen as a one-player TNG, strict guards imply that an optimal timed path may not be achieved. In turn, this means that an NE does not exist.
To overcome this issue, we use $\epsilon$-NE, for $\epsilon >0$; an $\epsilon$-deviation is one that improves the payoff of a player at least by $\epsilon$, and an $\epsilon$-NE is a profile in which no player has a $\epsilon$-deviation. Our techniques can be adapted to show that $\epsilon$-NE exist in TNGs with strict guards. The proof uses  the results of \cite{BBBR07} that show that an $\epsilon$-optimal timed path exists in PTAs. The proof technique for existence of NE in TNGs with non-strict guards can then be adapted to the strict-guard case.
\end{remark}

\stam{
TODO
\begin{itemize}
\item Check: Given a $T$-profile $P$,  does there exists a $c$ that depends on $T$ and the TNG such that $cost_i(P) - cost_i(P[i \gets \pi]) >= c$? This will allow us to strengthen the theorem to: from {\em every} profile (not necessarily integral), there is a best-response sequence  that converges to an NE.
\item Consider a non-normalized TNG. Bound the number of steps needed for convergence.
\end{itemize}
}

\section{Equilibrium Inefficiency}
In this section we address the problem of measuring the degradation in social welfare due to selfish behavior, which is measured by the PoS and PoA measures. 
We show that the upper bounds from RAGs on these two measures apply to TNGs. For cost-sharing TNGs, we show that the PoS and PoA are at most $\log k$ and $k$, respectively, as it is in cost-sharing RAGs. Matching lower bounds were given in \cite{AGK17} already for GTNGs.
For congestion TNGs with affine latency functions, we show that the PoS and PoA are $1 + \sqrt(3) / 3 \approx 1.577$ and $\frac{5}{2}$, respectively, as it is in congestion RAGs. Again, a matching lower bound for PoA is shown in \cite{AGK17} for GTNGs, and a matching lower bound for the PoS remains open.
Let $\F$ denote a family of latency functions and $\F$-TNGs and $\F$-RAGs denote, respectively, the family of TNGs and RAGs that use latency functions from this family.

\begin{theorem}
\label{thm:inefficiency}
Consider a family of latency functions $\F$. We have $PoS(\F\mbox{-TNGs}) \leq PoS(\F\mbox{-RAGs})$ and $PoA(\F\mbox{-TNGs}) \leq PoA(\F\mbox{-RAGs})$. In particular, the PoS and PoA for cost-sharing TNGs with $k$ players is at most $\log(k)$ and $k$, respectively, and for congestion TNGs with affine latency functions it is at most roughly $1.577$ and $\frac{5}{2}$
respectively.
\end{theorem}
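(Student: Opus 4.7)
The plan is to lift the classical potential-based and smoothness-based inefficiency proofs for $\F$-RAGs to $\F$-TNGs by working on a common refinement of the periods used by the profiles being compared. Given any two profiles $P$ and $P^*$, I would take the common refinement $\Upsilon_{P,P^*}$ of the period partitions exactly as constructed in the proof of Theorem~\ref{thm:potential-games}. Within each period $\gamma \in \Upsilon_{P,P^*}$ neither profile changes vertex, so $load_{P'}(v,\gamma)$ is constant for $P' \in \{P,P^*\}$, and each pair $(v,\gamma)$ behaves like a single resource in an RAG whose latency function is $|\gamma| \cdot \ell_v$ (still in $\F$ up to scaling by a positive constant, which does not affect PoS/PoA). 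In particular, the Rosenthal-style potential decomposes as $\Psi(P) = \sum_{\gamma,v} \sum_{j=1}^{load_P(v,\gamma)} |\gamma| \cdot \ell_v(j)$, exactly mirroring the RAG case.

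For the PoS bound in cost-sharing TNGs, I would use the per-$(v,\gamma)$ inequality $|\gamma| c_v \le |\gamma| c_v H_{load_P(v,\gamma)} \le H_k \cdot |\gamma| c_v$, which sums to $cost(P) \le \Psi(P) \le H_k \cdot cost(P)$. Next, I would start from an integral social optimum $P^*$, whose existence follows by combining the reduction to PTAs in Theorem~\ref{thm:SO-TNG-to-PTA} with the integrality of cost-optimal paths in Theorem~\ref{thm:PTA}. By Theorem~\ref{thm-all have bne}, an integral best-response sequence from $P^*$ converges to an integral NE $P^{NE}$, and since $\Psi$ is a potential, $\Psi(P^{NE}) \le \Psi(P^*)$. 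The chain $cost(P^{NE}) \le \Psi(P^{NE}) \le \Psi(P^*) \le H_k \cdot cost(P^*)$ yields $PoS \le H_k = O(\log k)$. The PoA bound of $k$ for cost-sharing follows by the standard argument: in any NE $P$ and for any profile $P^*$, $cost_i(P) \le cost_i(P[i \gets \pi^*_i])$ for every $i$; summing and bounding each term on the common refinement by the worst-case load ratio gives $cost(P) \le k \cdot cost(P^*)$.

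For congestion TNGs with affine latency functions, I would appeal to the $(\lambda,\mu)$-smoothness framework of Roughgarden. The pointwise inequality $\ell(l+1) \cdot l^* \le \tfrac{5}{3} \ell(l^*) \cdot l^* + \tfrac{1}{3} \ell(l) \cdot l$ holds for every affine $\ell$ and all nonnegative integers $l, l^*$. Applied to each pair $(v,\gamma) \in \Upsilon_{P,P^*} \times V$ with $l = load_P(v,\gamma)$ and $l^* = load_{P^*}(v,\gamma)$, and summed over all pairs (with the factor $|\gamma|$ absorbed into the inequality), this gives $\sum_i cost_i(P[i \gets \pi^*_i]) \le \tfrac{5}{3} cost(P^*) + \tfrac{1}{3} cost(P)$. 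Combined with the NE inequality $cost(P) \le \sum_i cost_i(P[i \gets \pi^*_i])$, one obtains $cost(P) \le \tfrac{5}{2} cost(P^*)$. The PoS bound of $1 + \sqrt{3}/3$ follows analogously from the Christodoulou--Koutsoupias--style smoothness on the potential, which again aggregates per-$(v,\gamma)$ bounds.

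The main subtlety, and hence the step I expect to be the real obstacle, is handling the time dimension carefully: unlike in a RAG, different players' strategies have different break-point times, so the ``resources'' $(v,\gamma)$ depend on the profiles being compared. One must verify that (i) the common refinement $\Upsilon_{P,P^*}$ is finite, (ii) when Player $i$ deviates from $\pi_i$ to $\pi_i^*$, the only $(v,\gamma)$ whose load changes are those where she herself enters or leaves, so the per-resource inequalities aggregate correctly, and (iii) the integrality assumption in Theorem~\ref{thm-all have bne} is compatible with starting the best-response sequence at a social optimum, which is exactly what the integrality of optimal PTA paths guarantees. Once these bookkeeping points are settled, each of the classical RAG proofs transfers verbatim.
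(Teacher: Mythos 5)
Your argument takes a genuinely different route from the paper's. The paper does not re-derive any of the classical bounds: given an NE $N$ and a social optimum $O$ of a TNG $\T$, it builds an auxiliary RAG $\R$ over the resources $V \times \Upsilon$ (with $\Upsilon$ the union of the two period partitions and $\ell_{\zug{v,\gamma}}(l) = |\gamma|\cdot\ell_v(l)$) in which each player has exactly \emph{two} strategies, $\set{\visits_N(i,\gamma)}_{\gamma}$ and $\set{\visits_O(i,\gamma)}_{\gamma}$. The profile of $O$-strategies is the social optimum of $\R$, the profile of $N$-strategies is an NE of $\R$, and so the ratio $cost(N)/cost(O)$ is realized as the PoS of (resp.\ is bounded by the PoA of) a concrete $\F$-RAG; the known RAG bounds then apply as a black box. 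This single construction proves the first sentence of the theorem, $PoS(\F\mbox{-TNGs}) \le PoS(\F\mbox{-RAGs})$ and $PoA(\F\mbox{-TNGs}) \le PoA(\F\mbox{-RAGs})$, for an \emph{arbitrary} family $\F$ (implicitly, one closed under scaling by the positive constants $|\gamma|$), with the four numerical bounds falling out as corollaries.

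Your proposal instead transplants the potential-sandwich and smoothness proofs onto the refined resources $(v,\gamma)$. For the four ``in particular'' bounds this is sound: the bookkeeping points (i)--(iii) you flag are exactly the right ones and all hold, and your use of Theorems~\ref{thm:SO-TNG-to-PTA}, \ref{thm:PTA} and~\ref{thm-all have bne} to obtain an integral SO and a best-response sequence from it is the correct way to reach an NE $P^{NE}$ with $\Psi(P^{NE}) \le \Psi(P^*)$. What this route does not deliver is the general first claim: re-running known proofs for cost-sharing and for affine latencies does not establish $PoS(\F\mbox{-TNGs}) \le PoS(\F\mbox{-RAGs})$ for every family $\F$, and for the affine PoS bound of $1+\sqrt{3}/3$ you are committing to reproducing the full Christodoulou--Koutsoupias/Caragiannis et al.\ analysis rather than citing it, which your one-sentence ``follows analogously'' does not yet do. The fix is small and essentially yields the paper's proof: package your per-$(v,\gamma)$ observation as an explicit two-strategy $\F$-RAG containing the NE and the SO, after which the family-level inequalities follow without reopening any RAG argument.
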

\begin{proof}
We prove for PoS in cost-sharing games and the other proofs are similar. Consider a~TNG $\T$ and let $N^1, N^2,\ldots$ be a sequence of NEs whose cost tends to $c^* = \inf_{P \in \Gamma(\T)} cost(P)$. Let $O$ be a social optimum profile in $\T$, which exists due to Theorem~\ref{thm:SO-TNG-to-PTA}. Thus, $PoS(\T) = \lim_{j \to \infty} cost(N^j)/$ $cost(O)$. We show that each element in the sequence is bounded above by $PoS(\mbox{cost-sharing RAGs})$, which implies that $PoS(\T) \leq PoS(\mbox{cost-sharing RAGs})$, and hence
$PoS(\mbox{cost-sharing TNGs}) \leq PoS(\mbox{cost-sharing RAGs})$. 
For each $j \geq 1$, we construct 
below
an RAG $\R_j$ that has 
$PoS(\R_j) = cost(N^j)/cost(O)$, 
and since $\R_j$ is a cost-sharing RAG, we have $PoS(\R_j) \leq PoS(\mbox{cost-}$ $\mbox{sharing RAGs})$.

For $j \geq 1$, we construct a RAG $\R_j$ in which, for each $i \in [k]$, Player~$i$ has two strategies; one corresponding to her strategy in $N^j$ and one corresponding to her strategy in $O$. Formally, let $\Upsilon^j = \Upsilon_{N^j} \cup \Upsilon_O$ be the time periods of $N^j$ and $O$. We construct a RAG $\R_j = \zug{k, E^j, \set{\Sigma^j_i}_{i \in [k]}, \set{\ell_e}_{e \in E^j}}$, where $E^j = V \times \Upsilon^j$ and for 
each
$e = \zug{v, \gamma} \in E^j$, we 
define $\ell_e$ such that for each $l \in [k]$, we have that $\ell_e(l) = |\gamma| \cdot \ell_v(l)$.
and we define the players' strategies below.  Recall that, for a profile $P$, $i \in [k]$, and an interval $\gamma$, $\visits_P(i, \gamma)$ denotes the vertex at which Player~$i$ 
stays
during $\gamma$ in $P$. 
We extend the definition of $\visits_P$ to allow periods that occur after Player~$i$ has 
reached
her destination, and define that the function returns $u_i$. Moreover, we assume w.l.o.g. that $u_i$ is a vertex with no outgoing edges, thus the paths of the other players do not traverse $u_i$. Player~$i$'s two strategies are $n^j_i = \set{\visits_{N^j}(i, \gamma): \gamma \in \Upsilon_{N^j}}$ and $o^j_i = \set{\visits_O(i,\gamma): \gamma \in \Upsilon_O}$. Clearly, $\zug{o_1^j, \ldots, o^j_k}$ is the social optimum of $\R_j$. Also, $\zug{n^j_1,\ldots, n^j_k}$ is an NE and we assume it is the cheapest NE in $\R_j$. Otherwise, we can alter $N^j$ to match the best NE in $\R_j$ and only improve the sequence. Thus, we have $PoS(\R_j) = cost(N^j)/cost(O)$. Since $\R_j$ is a cost-sharing RAG, we have $PoS(\R_j) \leq PoS(\mbox{cost-sharing RAGs})$, and we are done.
\end{proof}

\section{Time Bounds}
\label{sec tb}
Recall that due to resets of clocks, the time by which a profile ends can be potentially unbounded.
It is interesting to know, given a TNG,  whether there are time bounds within which some interesting profiles like an NE and an SO are guaranteed to exist.
Earlier
we showed that every TNG is guaranteed to have an integral NE 
(Theorem~\ref{thm-all have bne}) 
and  an integral 
SO
(Theorem~\ref{thm:SO-TNG-to-PTA}). In this section we give bounds on the time 
by
which such profiles 
end.
That is, given a TNG $\T$, we find 
$t_{NE}(\T),T_{SO}(\T) \in \Q_{\ge 0}$
such that an integral NE $N$ and an integral SO $O$ exist in $\T$ in which the players reach their destinations by time $t_{NE}(\T)$ and $T_{SO}(\T)$ respectively. 

We start by showing a time bound on an optimal timed path in a PTA, and then proceed to TNGs.

\begin{lemma}
\label{lem:PTA-COR-time}
Consider a PTA $\P= \zug{C, V, E, \set{r_v}_{v \in V}}$, and let $\chi$ be the largest constant appearing in the guards on the edges of $\P$. Then, for every $s,u \in V$, there is an integral optimal timed path from $s$ to $u$ that ends by time $|V| \cdot (\chi+2)^{|C|}$.
\end{lemma}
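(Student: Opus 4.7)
The plan is to start with an integral optimal timed path (which exists by Theorem~\ref{thm:PTA}), and then repeatedly shorten it by cutting out cycles in an abstract configuration graph. Define an \emph{abstract valuation} $\alpha(\kappa) : C \to \{0, 1, \ldots, \chi, \chi{+}1\}$ by $\alpha(\kappa)(x) = \kappa(x)$ if $\kappa(x) \le \chi$, and $\alpha(\kappa)(x) = \chi{+}1$ otherwise. The key observation is that $\alpha(\kappa)$ determines the truth value of every guard in $\P$: since the largest constant in a guard is $\chi$ and guards use only $\le$, $=$, and $\ge$, any two valuations with the same abstract valuation satisfy exactly the same guards. For an integral path $\eta$, at each integer time $t \in \{0,1,\ldots,T\}$ (where $T$ is the end time of $\eta$) let $c_t = (v_t, \alpha(\kappa_t))$ be the \emph{abstract configuration}, where $v_t$ is the vertex occupied right after any instantaneous edges at time $t$. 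The number of abstract configurations is at most $|V| \cdot (\chi+2)^{|C|}$.

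Given an integral optimal path $\eta^*$ ending at time $T > |V| \cdot (\chi+2)^{|C|}$, by the pigeonhole principle there exist integer times $t_1 < t_2 \le T$ with $c_{t_1} = c_{t_2}$. Form a new path $\eta'$ by concatenating the prefix of $\eta^*$ up to time $t_1$ with the suffix of $\eta^*$ from time $t_2$ onward, shifted back by $t_2 - t_1$.

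The main obstacle is verifying that $\eta'$ is a legal timed path, since after gluing, the actual clock values along the shifted suffix differ from those in $\eta^*$. To handle this, I would show by induction on the length of the suffix that for every clock $x$ and every (new) time $t_1 + s$, the abstract value $\alpha(\kappa'_{t_1+s})(x)$ in $\eta'$ equals $\alpha(\kappa_{t_2+s})(x)$ in $\eta^*$: for clocks reset during the suffix after their most recent reset the values coincide exactly, while for clocks not reset during the suffix both $\kappa_{t_1}(x)+s$ and $\kappa_{t_2}(x)+s$ lie on the same side of $\chi$ because $\alpha(\kappa_{t_1}) = \alpha(\kappa_{t_2})$. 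Since guards depend only on abstract valuations, every edge taken in the suffix remains enabled in $\eta'$.

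Finally, since all rates in $\P$ are non-negative, $price(\eta') \le price(\eta^*)$, so $\eta'$ is also optimal (it cannot be strictly cheaper by optimality of $\eta^*$) and ends strictly earlier. Iterating this shortening procedure, which terminates because end times are natural numbers that strictly decrease, yields an integral optimal path from $s$ to $u$ ending by time $|V| \cdot (\chi+2)^{|C|}$.
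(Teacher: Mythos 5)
Your proof is correct and follows essentially the same route as the paper's: abstract clock values above $\chi$ to a finite range, pigeonhole on pairs of a vertex and an abstract valuation, excise the segment between two repeated pairs (legality preserved because guards only see abstract values, price not increased because rates are non-negative). The only cosmetic difference is that you pigeonhole over integer time points and iterate the shortening, whereas the paper pigeonholes over the vertex visits of an earliest-ending optimal path with no instantaneous loops and argues by contradiction; your time-point version makes the step from ``ends after time $|V|\cdot(\chi+2)^{|C|}$'' to ``some configuration repeats'' slightly more immediate.
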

\begin{proof}
Consider an optimal integral timed path $\eta$ in $\P$ that ends in the earliest time and includes no loop that is traversed instantaneously. Let $v_0,\ldots, v_n$ be the sequence of vertices that $\eta$ traverses, and, for 
$0 \leq i < n$, 
let $\kappa_i$ be the clock valuation before exiting the vertex $v_i$. Since $\eta$ is integral, $\kappa_i$ assigns integral values to clocks. Note that since the largest constant appearing in a guard in $\P$ is $\chi$, the guards in $\P$ cannot differentiate between clock values greater than $\chi$. We abstract away such values and define the {\em restriction} of a clock valuation $\kappa_i$ to be $\beta_i: C \rightarrow (\set{0} \cup [\chi] \cup \set{\top})$ by setting, for $x \in C$, the value $\beta_i(x) = \kappa_i(x)$, when $\kappa_i(x) \leq \chi$, and $\beta_i(x) = \top$, when $\kappa_i(x) > \chi$. Assume towards contradiction that $\eta$ ends after time $|V| \cdot (\chi+2)^{|C|}$. Then, there are $0 \leq i < j < n$ such that $\zug{v_i, \beta_i} = \zug{v_j, \beta_j}$. Let $\eta = \eta_1 \cdot \eta_2 \cdot \eta_3$ be a partition of $\eta$ such that $\eta_2$ is the sub-path between the $i$-th and $j$-th indices. Consider the path $\eta' = \eta'_1 \cdot \eta'_3$ that is obtained from $\eta$ by removing the sub-path $\eta_2$. First, note that $\eta'$ is a legal path. Indeed, the restrictions of the clock valuations in $\eta_1$ and $\eta_3$ match these in $\eta'_1$ and $\eta'_3$, that is, $\eta' = \eta_1 \cdot \eta_3$. Second, since we assume that traversing the loop $\eta_2$ is not instantaneous, we know that $\eta'$ ends before $\eta$. Moreover, since the rates in $\P$ are non-negative, we have $price(\eta') \leq price(\eta)$, and we reach a contradiction to the fact that $\eta$ is an optimal timed path that ends earliest. 
\stam{
First we need some definitions.
For $\chi \in \Nat$, let $\K_\chi$ be the set of functions of the form $\beta: C \rightarrow (\set{0} \cup [\chi] \cup \set{\top})$. Intuitively, a function in $\K_\chi$ cannot distinguish between two values that are greater than $\chi$. 
We define a function $\alpha_{\chi}: {\Nat^C} \rightarrow \K_\chi$ by $\alpha_{\chi}(\kappa) = \beta$ such that for each clock $x \in C$, we have $\beta(x) = \kappa(x)$ when $\kappa(x) \leq \chi$, and otherwise $\beta(x) = \top$.

Recall that by Theorem~\ref{thm:PTA}, an integral optimal timed path from $s$ to $u$ exists in $\P$. Let $\eta$ be such an optimal path that ends at the earliest possible time $\tau$. 
For $0 \leq i \leq \tau$, let $\kappa_i$ and $v_i$ respectively be the clock valuation at time $i$ and the vertex that $\eta$ visits at that time in $\eta$. 
Traversing an edge in $\P$ is instantaneous.
We define $\kappa_i$ to be the clock valuation immediately before traversing the edge and $v_i$ to be the source of the edge. 
Consider the sequence $\zug{v_0, \alpha_\chi(\kappa_0)}, \ldots, \zug{v_\tau, \alpha_\chi(\kappa_\tau)}$ corresponding to path $\eta$.
If for contradiction, $\tau > |V| \cdot (\chi+2)^{|C|}$, then we can find two time instants
$0 \leq i < j \leq \tau$ such that removing the subpath of $\eta$ between time $i$ and $j$ results in an optimal timed path $\eta'$ that ends at time $\tau - (j-i)$, that is before time $\tau$.
Since the rates in $\P$ are non-negative, we have $price(\eta') \leq price(\eta)$, and we reach a contradiction that $\eta'$ is an integral optimal timed path from $s$ to $u$ that ends before $\eta$.
}
\end{proof}

\begin{theorem} \label{thm-SObound}
For a $k$-player TNG $\T$ with a set $V$ of vertices and a set $C$ of clocks, there exists an SO that ends 
by
time $\mathcal{O}(|V|^k \cdot \chi^{k|C|})$, where $\chi$ is the maximum constant appearing in $\T$. 
For every $k \geq 1$, there is a $k$-player 
(cost-sharing and congestion)
TNG ${\cal T}_k$ such that $\T_k$ has $\mathcal{O}(k)$ states, the boundaries in the guards in $\T_k$ are bounded by $\mathcal{O}(k \log k)$, and any SO in $\T_k$ requires time $2^{\Omega(k)}$.
\end{theorem}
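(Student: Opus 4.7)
The statement splits into an upper-bound claim and a lower-bound claim, which I would attack independently.

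\textbf{Upper bound.} The plan is to push the problem into a PTA via Theorem~\ref{thm:SO-TNG-to-PTA}. That construction produces a PTA $\P$ with $k|C|$ clocks and $|V|^k$ vertices in which SOs of $\T$ correspond to cost-optimal timed paths from $\bar{s}$ to $\bar{u}$. Since the guards of $\P$ are syntactic copies of those of $\T$, the largest integer constant in $\P$ is still $\chi$. Applying Lemma~\ref{lem:PTA-COR-time} to $\P$ gives an integral optimal timed path ending by time $|V|^k \cdot (\chi+2)^{k|C|}$, and pulling this back through the cost-preserving correspondence of Theorem~\ref{thm:SO-TNG-to-PTA} yields an integral SO of $\T$ ending by time $\mathcal{O}(|V|^k \cdot \chi^{k|C|})$, as required.

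\textbf{Lower bound.} Here I would realize the prime-synchronization idea sketched in Section~\ref{intro}. Let $p_1 < p_2 < \cdots < p_k$ be the first $k$ primes; by the prime number theorem $p_k = \mathcal{O}(k \log k)$, while by Mertens' theorem $p_1 p_2 \cdots p_k \geq 2^{\Omega(k)}$. The TNG $\T_k$ contains a single \emph{gateway} vertex $v$ which every player must occupy for exactly one time unit between her source $s_i$ and target $u_i$; all other vertices are free. For each player $i$ I would add a gadget with a private clock $x_i$ and a two-vertex ``ticker'' whose only outgoing transition to $v$ is gated by $x_i = p_i$ (with $x_i$ reset on each tick), using the two-vertex trick from the proof of Theorem~\ref{thm:SO-TNG-to-PTA} to avoid self-loops. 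Each gadget has $\mathcal{O}(1)$ vertices and guard constants at most $p_i$, so $\T_k$ has $\mathcal{O}(k)$ vertices, $\mathcal{O}(k)$ clocks, and all guard constants bounded by $p_k = \mathcal{O}(k \log k)$. Crucially, player $i$ can enter $v$ only at positive integer multiples of $p_i$, so the earliest common entry time is $\mathrm{lcm}(p_1, \ldots, p_k) = p_1 \cdots p_k$.

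\textbf{Why the SO is synchronized, and the congestion variant.} For cost-sharing with $\ell_v(l) = 1/l$, the social-cost contribution of $v$ over any period of length $t$ with nonzero load equals $t$, independent of the load, so the total social cost of a profile equals the total length of time during which $v$ is occupied. Since each player must spend exactly one time unit at $v$, this total length is $1$ when all $k$ players coincide at $v$ and at least $2$ otherwise, making the synchronized profile the unique SO; it therefore cannot end before time $p_1 \cdots p_k + 1 = 2^{\Omega(k)}$. For the congestion variant, the ticker gadgets stay in place and one replaces $\ell_v$ and the routing around $v$ by a design (for instance, coupling $v$ with an auxiliary ``high-cost-alone'' detour that penalises any player who visits $v$ at a time when the others are absent) so that the synchronized profile is again the unique SO. The main technical obstacle lies in this congestion construction, since non-decreasing latencies naturally favour spreading the load and one must engineer the surrounding topology to counteract that tendency while preserving the $\mathcal{O}(k)$-size and $\mathcal{O}(k \log k)$-boundary budgets; the upper bound, by contrast, is essentially a corollary of the two invoked lemmas.
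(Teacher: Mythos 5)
Your upper bound is exactly the paper's argument (apply Theorem~\ref{thm:SO-TNG-to-PTA} and then Lemma~\ref{lem:PTA-COR-time} to the resulting PTA), and your cost-sharing lower bound realizes the same gateway-plus-primes construction the paper uses. Your accounting that the social cost contributed by the gateway vertex equals the measure of the set of times at which it is occupied, and that distinct integer entry times force this measure to be at least $2$, is correct and in fact somewhat more explicit than the paper's justification that the synchronized profile is the unique SO.

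The gap is the congestion half of the lower bound, which you defer rather than prove. The direction you sketch --- keeping the gateway and adding a ``high-cost-alone'' detour so that the synchronized profile is again the unique SO --- runs against the definition of congestion TNGs: latency functions are non-decreasing in the load, so a player's per-unit cost at a vertex can never decrease when others join her, and no surrounding topology lets a vertex ``know'' whether the other players are simultaneously present elsewhere. The paper resolves this by inverting the objective: instead of forcing the players to meet, it forces them to avoid one another, which is exactly what non-decreasing latencies reward. Concretely, the players sit on a $k$-vertex polygon whose vertices cost $0$ under load $1$ and strictly more under higher load; each player must traverse $k-1$ edges of the polygon before exiting to her target, and player $i$ can move only at multiples of $p_i$. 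Keeping every load at $1$ throughout forces all players to rotate in lockstep, the earliest common move time is $p_1 \cdots p_k$, and the SO (of cost $0$) therefore requires time $(k-1) \cdot p_1 \cdots p_k = 2^{\Omega(k)}$. You would need to supply a construction of this avoidance type, or make your penalty idea precise in a way compatible with non-decreasing latencies, for the congestion claim to be established.
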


\begin{proof}
We start with the upper bound. Consider a TNG $\T$
with a set $V$ of vertices and a set $C$ of clocks. By Theorem~\ref{thm:SO-TNG-to-PTA},
we can construct a PTA $\P$ 
with $|V|^ k$ vertices and $k|C|$ clocks 
such that a social optimum of $\T$ is an optimal timed path in $\P$. Applying Lemma~\ref{lem:PTA-COR-time},  we are done. 

We turn to the lower bounds. We show that for every $k \geq 1$, there is a $k$-player (cost-sharing and congestion) TNG ${\cal T}_k$ such that $\T_k$ has $\mathcal{O}(k)$ states, the boundaries in the guards in $\T_k$ are bounded by $\mathcal{O}(k \log k)$, and any SO in $\T_k$ requires time $2^{\Omega(k)}$.

\begin{figure}[ht]
\vspace{-2mm}
\includegraphics[width=0.5\textwidth]{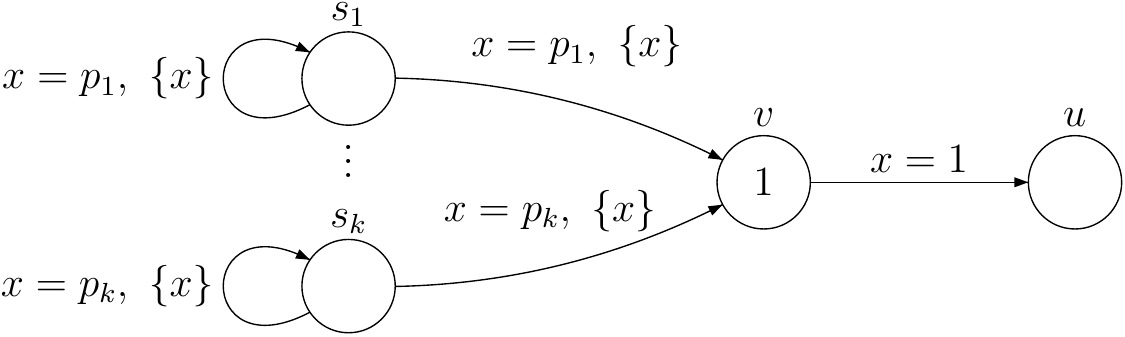}
\quad
\includegraphics[width=0.5\textwidth]{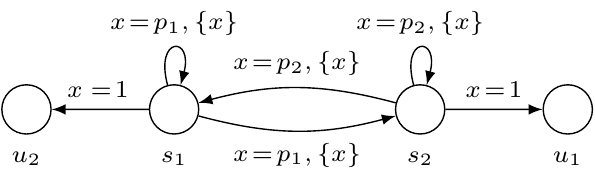}
\caption{\label{solong2} The time required for the SO is not polynomial.}
\end{figure}
\vspace{-.3cm}
Consider the $k$-player 
cost-sharing TNG  
appearing on the left of Figure~\ref{solong2}.
Let $p_1,\ldots,p_k$ be relatively prime (e.g., the the first $k$ prime numbers).
All the vertices in the TNG have cost $0$, except for $v$, which has some positive cost function. Each player $i$ has to spend  one time unit in $v$ in her path from $s_i$ to $u$. In an SO, all $k$ players spend this one time unit simultaneously, which forces them all to reach $v$ at time $\prod_{1 \leq i \leq k}  p_i$. Since the $i$-th prime number is $O(i \log i)$ and the product of the first $i$ prime numbers is $2^{\Omega(i)}$, we are done. 
We note that we could define the TNG also with no free vertices, 
that is vertics with $0$ cost,
by setting the cost in $v$ to be much higher than those in the source vertices.

For congestion games, the example
is more complicated. We start with the case of two players. Consider the 
congestion TNG
appearing on the right of
Figure~\ref{solong2}. Assume that $p_1$ and $p_2$ are relatively prime, $r_{s_1}(1)=r_{s_2}(1)=0$, and
$r_{s_1}(2)=r_{s_2}(2)=1$.
In the SO, the two players avoid each other in their paths from $s_i$ to $u_i$, and the way to do so is to wait $p_1 \cdot p_2$ time units before the edge from $s_i$ to $s_{3-i}$ is traversed.
Below
we generalize this example to $k$ players. Again, we could define the TNG with no free vertices.

We generalize the $2$-player congestion TNG appearing 
on the right
of Figure~\ref{solong2} to an arbitrary number of players. The extension to $3$ players appears in Figure~\ref{solong3} below.
As in the case of $2$ players, the cost function in the vertices $s_1, s_2$, and $s_3$ is  $0$ for load $1$ and strictly  positive for higher loads. In order to reach her target, Player $i$ has to traverse $2$ edges in the triangle before she can take the edge to $u_i$. In the SO that ends at the earliest possible time, the players perform these traversals together, so the game needs time $2 \cdot p_1 \cdot p_2 \cdot p_3$. 
 
\begin{figure}[ht]
\centering
\includegraphics[width=0.4\textwidth]{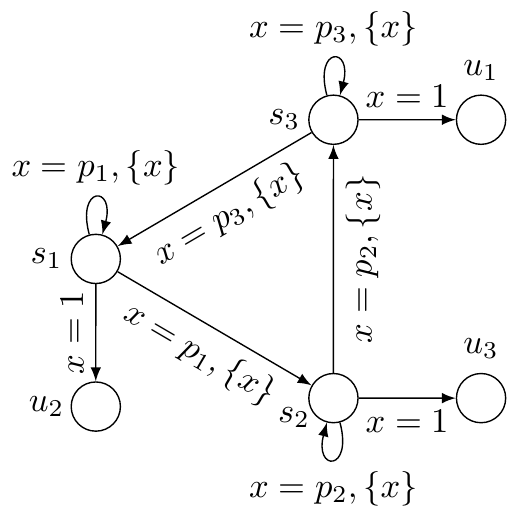}
\caption{\label{solong3} The time required for the SO is $2 \cdot p_1 \cdot p_2 \cdot p_3$.}
\end{figure}

In the extension to $k$ players, the TNG consists of a $k$-vertex polygon (to which the target vertices are connected), and the players have to traverse $k-1$ edges in it. Doing this simultaneously requires time $(k-1)  \cdot p_1 \cdot p_2 \cdots p_k$.
\end{proof}
\vspace{-.3cm}

We proceed to derive a time bound for the existence of an NE. For a TNG $\T$, let $L_\T \in \Nat$ be the smallest number such that multiplying the latency functions by $L_\T$ results in a normalized TNG. 
Recall the $SO(\T)$ is the cost of a social optimum in $\T$.

\begin{theorem}
\label{thm:NE-time-bound}
Consider a TNG $\T$ with $k$ players, played on a timed network $\zug{V, E, C}$, and let $\chi$ be the maximum constant appearing in a guard. Then, there is an NE in $\T$ that ends by time $\mathcal{O}(\varphi \cdot |V| \cdot \chi^{|C|} + |V|^k \cdot \chi^{k|C|})$, where $\varphi= L_\T \cdot SO(\T)$ for congestion TNGs and $\varphi = L_\T \cdot \log(k) \cdot SO(\T)$ for cost-sharing TNGs.
\stam{
Consider a TNG $\T = \zug{k, C, V, E, \{\ell_v\}_{v \in V}, \zug{s_i, u_i}_{i \in [k]}}$, where $\chi$ is the maximum constant appearing in the guards on the edges of $\T$ and SOPT is the cost of a social optimum profile. Then, there is an NE in $\T$ that ends by time $\mathcal{O}(|V|^{2^{\varphi-1}k+2^{\varphi-2}} + \chi^{(2^ \varphi - 1)k|C|})$,
where for congestion TNGs, we have $\varphi= L_\T \cdot (|C|k)^{SOPT}$ and for cost-sharing TNGs, we have $\varphi = L_\T \cdot (|C|k)^{\log(k) \cdot SOPT}$.}
\end{theorem}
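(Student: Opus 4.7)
The plan is to start from an integral social optimum and run the integral best-response sequence from Theorem~\ref{thm-all have bne} until it reaches an NE, while carefully bounding how much each step can extend the profile's end time.

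By Theorem~\ref{thm-SObound}, there is an integral SO profile $P_0$ that ends by time $T_0 = \mathcal{O}(|V|^k \chi^{k|C|})$. To count best-response steps, I would pass to the normalized TNG $\T'$ obtained by multiplying every latency by $L_\T$: costs and the Rosenthal-type potential $\Psi$ from Theorem~\ref{thm:potential-games} both scale by $L_\T$, and NEs of $\T$ and $\T'$ coincide. A standard calculation bounds $\Psi$ by the cost of a profile---$\Psi(P)\leq cost(P)$ for congestion (since latencies are non-decreasing, each summand $\ell_v(j)$ is dominated by $\ell_v(\load_P(v,\gamma))$) and $\Psi(P)\leq H(k)\cdot cost(P) = \mathcal{O}(\log k)\cdot cost(P)$ for cost-sharing (the harmonic bound, since $\sum_{j=1}^l \frac{c_v}{j}=c_v H(l)$). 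Therefore $\Psi_{\T'}(P_0)=\mathcal{O}(\varphi)$. Lemma~\ref{lem-costBRCSCON} guarantees that every beneficial integral deviation decreases $\Psi_{\T'}$ by at least $1$, so Theorem~\ref{thm-all have bne} produces an integral NE $P^*$ after at most $\mathcal{O}(\varphi)$ steps.

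The main technical step is bounding how much one best-response step can push out the end time. Let $P$ be the current profile ending at $t_{max}$, and let Player $i$ deviate to an integral best response $\pi'_i$. By Theorem~\ref{thm:BR-TNG-to-PTA}, $\pi'_i$ corresponds to an optimal integral path in a PTA $\P$ whose final ``copy'' of the network---the one used during $[t_{max},\infty)$---has baseline rates $\ell_v(1)$, is no longer constrained by the global clock (only an inequality of the form $x_{n+1}\ge\tau$ remains, which is trivially satisfied thereafter), and uses only the $|C|$ local clocks of $\T$ with constants bounded by $\chi$. Since all rates are non-negative, an optimal integral path has no incentive to loop inside this tail, so applying Lemma~\ref{lem:PTA-COR-time} to the tail sub-PTA (with $|V|$ vertices, $|C|$ clocks, and maximum constant $\chi$) shows that $\pi'_i$ ends at most $\mathcal{O}(|V|\chi^{|C|})$ time units after $t_{max}$. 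Hence the new profile ends by $t_{max}+\mathcal{O}(|V|\chi^{|C|})$.

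Combining the two bounds, after at most $\mathcal{O}(\varphi)$ steps the end time has grown from $T_0$ by at most $\varphi\cdot \mathcal{O}(|V|\chi^{|C|})$, giving an integral NE that ends by $\mathcal{O}(\varphi\cdot |V|\chi^{|C|}+|V|^k\chi^{k|C|})$, as required. The hard part is formalizing the tail-bound argument: one must argue that an optimal integral best response can be chosen so that the portion after $t_{max}$ is a cost-optimal timed path in a single baseline copy of the timed network (with the global clock rendered inactive), so that Lemma~\ref{lem:PTA-COR-time} applies cleanly to that suffix while the preceding prefix stays within $t_{max}$.
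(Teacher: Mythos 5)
Your proposal matches the paper's proof essentially step for step: start the integral best-response sequence from an earliest-ending social optimum (Theorem~\ref{thm-SObound}), bound the number of steps by the potential of that profile via normalization and the harmonic/identity bounds giving $\Psi(O)\le\varphi$, and bound the time added per step by applying Lemma~\ref{lem:PTA-COR-time} to the last copy of the best-response PTA, where the global clock is no longer constraining and the constants are bounded by $\chi$. The only cosmetic difference is that you spell out the $\Psi\le cost$ and $\Psi\le H(k)\cdot cost$ calculations that the paper leaves implicit; the argument is correct.
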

\begin{proof}
Recall the proof of Theorem~\ref{thm-all have bne} that shows that every TNG has an integral NE: we choose an initial integral profile $P$ and perform integral best-response moves until an NE is reached. The number of iterations is bounded by the potential $\Psi(P)$ of $P$. We start the best-response sequence from a social-optimum profile $O$ that ends earliest. By Theorem~\ref{thm-SObound}, there is such a profile that ends by time $\mathcal{O}(|V|^k \cdot \chi^{k|C|})$. Let $\varphi = L_\T \cdot SO(\T)$ in the case of congestion TNGs and $\varphi = L_\T \cdot (\ln(k)+1) \cdot SO(\T)$ in the case of cost-sharing TNGs. It is not hard to show that $\Psi(O) \leq \varphi$.

Next, we bound the time that is added in a best-response step. We recall the construction in Theorem~\ref{thm:BR-TNG-to-PTA} of the PTA $\P$ for finding a best-response move. Consider a TNG $\T$ and a profile of strategies $P$, where, w.l.o.g., we look for a best-response for Player~$k$. Suppose the strategies of Players $1,\ldots, k-1$ take transitions at times $\tau_1,\ldots, \tau_n$. We construct a PTA $\P$ with $n+1$ copies of $\T$. For $1 \leq i \leq n+1$, an optimal path in $\P$ starts in the first copy and moves from copy~$i$ to copy~$(i+1)$ at time $\tau_i$. We use the additional ``global'' clock to enforce these transitions. A key observation is that in the last copy, this additional clock is never used. Thus, the largest constant in a guard in the last copy coincides with $\chi$, the largest constant appearing in $\T$. Let $\eta$ be an optimal path in $\P$ and $\pi_k$ the corresponding strategy for Player~$k$. We distinguish between two cases. If $\eta$ does not enter the last copy of $\P$, then it ends before time $\tau_n$, namely the latest time at which a player reaches her destination. Then, the profile $P[k \gets \pi_k]$ ends no later than $P$. In the second case, the path $\eta$ ends in the last copy of $\P$. We view the last copy of $\P$ as a PTA. By Lemma~\ref{lem:PTA-COR-time}, the time at which $\eta$ ends is within $|V| \cdot (\chi+2)^{|C|}$ since its entrance into the copy, which is $\tau_n$. Then, $P[i \gets \pi_k]$ ends at most $|V| \cdot (\chi+2)^{|C|}$ time units after $P$. To conclude, the best-response sequence terminates in an NE that ends by time $\mathcal{O}(\varphi \cdot |V| \cdot (\chi+2)^{|C|} + |V|^k \cdot \chi^{k|C|})$.
\end{proof}

\section{Discussion and Future Work}
\label{disc}
The model of TNGs studied in this paper extends the model of GTNGs introduced in \cite{AGK17} by adding clocks. From a practical point of view, the addition of clocks makes TNGs significantly more expressive than GTNGs and enables them to model the behavior of many systems that cannot be modeled using GTNGs. From a theoretical point of view, the analysis of TNGs poses different and difficult technical challenges. In the case of GTNGs, a main tool for obtaining positive results is a reduction between GTNGs and NGs. Here, in order to obtain positive results we need to combine techniques from NGs and PTAs.

\stam{
We left several open problems. In Theorem \ref{thm-all have bne}, we describe a method for finding an integral NE through a sequence of BR moves. We leave open the complexity of finding an NE in TNGs. The complexity analysis of search problems typically refers to the complexity of an improvement step in a local search. For example, PLS \cite{FPT04}, which lies ``close'' to P, consists  of problems in which an improvement step can be done in polynomial time. For TNGs, an improving step, namely finding a best-response, is PSPACE-hard, and we conjecture that the problem is PSPACE-complete.
Further we show that the BR and SO problems for one-clock TNGs are in PSPACE and is NP-hard, leaving open the tight complexity. 
}

We left several open problems. In Theorem \ref{thm-all have bne}, we describe a method for finding an integral NE through a sequence of BR moves. We leave open the complexity of finding an NE in TNGs. For the upper bound, we conjecture that there is a PSPACE algorithm for the problem. For the lower bound, we would need to find an appropriate complexity class of search problems and show hardness for that class. For example, PLS \cite{FPT04}, which lies ``close'' to P, and includes the problem of finding an NE in NGs, consists  of search problems in which a local search, e.g., a BR sequence, terminates. Unlike NGs, where a BR can be found in polynomial time, in TNGs, the problem is PSPACE-complete. To the best of our knowledge, complexity classes for search problems that are higher than PLS were not studied.
Further we show that the BR and SO problems for one-clock TNGs is in PSPACE and is NP-hard, leaving open the tight complexity.

This work belongs to a line of works that transfer concepts and ideas between the areas of formal verification and algorithmic game theory: logics for specifying multi-agent systems \cite{AHK02,CHP07}, studies of equilibria in games related to synthesis and repair problems \cite{CHJ06,Cha06,FKL10,AAK15b}, and 
of non-zero-sum games in formal verification \cite{CMJ04,BBPG12}. This line of work also includes efficient reasoning about NGs with huge networks \cite{KT17,AGK17b}, an extension of NGs to objectives that are richer than reachability \cite{AKT16}, and NGs in which the players select their paths dynamically \cite{AHK16}. For future work, we plan to apply the real-time behavior of TNGs to these last two concepts; namely, TNGs in which the players' objectives are given as a specification that is more general than simple reachability or TNGs in which the players reveal their choice of timed path in steps, bringing TNGs closer to the timed games of \cite{AM99,AFHMS03}. 

\bibliography{ok}
 
\end{document}